\documentclass[sigconf]{acmart}

\usepackage{booktabs} 

\usepackage[ruled]{algorithm2e} 

\SetAlFnt{\small}
\SetAlCapFnt{\small}
\SetAlCapNameFnt{\small}
\SetAlCapHSkip{0pt}
\IncMargin{-\parindent}



\newcommand{\remove}[1]{}
\newcommand{\h}{\hspace*{0.2in}}
\newcommand{\Ra}{\Rightarrow}
\newcommand{\ra}{\rightarrow}
\newcommand{\CC}{L}

\newcommand{\RR}{\mathbf{R}}

\DeclareMathOperator{\frontier}{frontier}
\DeclareMathOperator{\forbidden}{forbidden}

\usepackage{microtype}

\graphicspath{{./figs/}}

\acmConference[ICDCN'23]{ICDCN'2023}{January 2023}{Kharagpur,  India}
\acmYear{2023}
\copyrightyear{2023}

\begin{document}
\title{Lattice Linear Predicate Algorithms for the Constrained Stable Marriage Problem with Ties}
\titlenote{partially supported by NSF CNS-1812349, and the Cullen Trust for Higher Education Endowed Professorship}
\author{Vijay K. Garg}
\orcid{1234-5678-9012}
\affiliation{%
  \institution{The University of Texas at Austin}
  \city{Austin}
  \state{Texas}
 \postcode{78712}
    \country{USA}
}
\email{garg@ece.utexas.edu}


\bibliographystyle{ACM-Reference-Format}


\begin{abstract}
We apply Lattice-Linear Predicate Detection Technique to derive parallel and distributed algorithms
for various variants of the stable matching problem. These problems are:
(a) the constrained stable marriage problem (b) the super stable marriage problem in presence of ties, 
and (c) the strongly stable marriage in presence of ties. All these problems are solved using the 
Lattice-Linear Predicate (LLP) algorithm showing its generality.
The constrained stable marriage problem is a version of finding the stable marriage in presence of
lattice-linear constraints such as ``Peter's regret is less than that of Paul.''
For the constrained stable marriage problem, we present a distributed algorithm
that takes $O(n^2)$ messages each of size $O(\log n)$ where $n$ is the number of men in the problem.
Our algorithm is completely asynchronous. Our algorithms for the stable marriage problem with ties
are also parallel  with no synchronization. 

 \end{abstract}

\maketitle

\begin{CCSXML}
<ccs2012>
   <concept>
       <concept_id>10003752</concept_id>
       <concept_desc>Theory of computation</concept_desc>
       <concept_significance>500</concept_significance>
       </concept>
   <concept>
       <concept_id>10003752.10003809.10010170</concept_id>
       <concept_desc>Theory of computation~Parallel algorithms</concept_desc>
       <concept_significance>500</concept_significance>
       </concept>
 </ccs2012>
\end{CCSXML}

\ccsdesc[500]{Theory of computation}
\ccsdesc[500]{Theory of computation~Parallel algorithms}

\keywords{distributive lattices; predicate detection;  optimization problems}

\section{Introduction}
The Lattice-Linear Predicate (LLP) algorithm \cite{DBLP:conf/spaa/Garg20} is a general technique for designing parallel algorithms for combinatorial
optimization problems. In \cite{DBLP:conf/spaa/Garg20}, it is shown that the stable marriage problem, the shortest path problem in a graph, and 
the assignment problem can all be solved using the LLP algorithm.
In \cite{Garg:ICDCN22},  many dynamic programming problems, in \cite{DBLP:conf/sss/Garg21}, the housing problem, and
in \cite{AlvGar22}, it is shown that the minimum spanning tree problem can be solved using the LLP
algorithm.
In \cite{DBLP:conf/sss/GuptaK21}, Gupta and Kulkarni extend LLP algorithms for deriving self-stabilizing algorithms.
In this paper, we show that many generalizations of the stable matching problem can also be solved using
the LLP algorithm. A forthcoming book on parallel algorithms \cite{Garg23} gives a uniform description of these and other problems
that can be solved using the LLP algorithm.

The Stable Matching Problem (SMP) \cite{gale1962college} has wide applications in economics, distributed computing, resource allocation and many other fields \cite{maggs2015algorithmic,iwama2008survey}.
In the standard SMP, there are $n$ men and $n$ women each with their totally ordered preference list. 
The goal is to find 
a matching between men and women such that there is no instability, i.e., there is no pair of a woman and a man such that they are not married to each other but
prefer each other over their partners. 
In this paper, we show that LLP algorithm can be used to derive solutions to a more general problem than SMP, called {\em constrained SMP}. In our formulation, in addition to men's preferences and women's preferences, there may be a set of
{\em lattice-linear} 
constraints 
on the set of marriages consistent with men's preferences. 
For example, we may state that Peter's regret  \cite{gusfield1989stable} should be less than that of Paul,
where the {\em regret} of a man in a matching is the choice number he is assigned. As another example, we may require
the matching must contain some pairs called  {\em forced pairs}, or must not contain some pairs called {\em forbidden pairs} \cite{Dias2003}.
We call such constraints {\em external} constraints. Any algorithm to solve constrained SMP can solve standard SMP by setting (external) constraints to the empty set.

In this paper, 
we also present a distributed algorithm to solve the constrained SMP in an asynchronous system. One of the goals is to show how a parallel LLP algorithm
can be converted into a distributed asynchronous algorithm.
Our distributed algorithm uses a diffusing computation whose
termination is detected using a standard algorithm such as the Dijkstra-Scholten algorithm. The algorithm uses
$O(n^2)$ messages each of size $O(\log n)$.
Kipnis and Patt-Shamir \cite{KipnisP10} have given a distributed algorithm for stable matching in a synchronous system. There are many differences with their work. First, they do not
consider external constraints and their work is not easily extensible for incorporating external constraints.
Second, for termination detection, they require each rejected node to broadcast the fact that the protocol has not terminated on a shortest-path tree. This step
requires the assumption of synchrony for termination detection and incurs additional message overhead. Our algorithm avoids such broadcasts and works for asynchronous systems.
Their paper suggests use of $\alpha$ synchronizer \cite{JACM::Awerbuch1985} for simulating in asynchronous systems. However, each round adds $O(n^2)$ messages for using
$\alpha$ synchronizer. Thus, our algorithm not only solves a more general problem, it is also more efficient for running the traditional SMP in an asynchronous system.

We also consider the generalizations of the stable matching problem to the case when the preference lists may have ties.
The problem of stable marriage with ties is clearly more general than the standard stable matching problem and has also been
extensively studied \cite{IRVING1994261,gusfield1989stable,david2013algorithmics}. 
We consider three versions of matching with ties. 
In the first version, called {\em weakly stable} matching $M$, there is no blocking pair of 
man and woman $(m,w)$ who are not married in $M$ but strictly prefer
each other to their partners in $M$. In the second version, called {\em superstable} matching $M$, we require
that there is no blocking pair of man and woman $(m,w)$ who are not married in $M$
but either (1) both of them prefer each other to their partners in $M$, or (2) one of them prefers the other over his/her partner in $M$ and the other one is
indifferent, or (3) both of them are indifferent to their spouses.
The third version, called {\em strongly stable matching}, we require that
if there is no blocking pair $(m,w)$ such that they are not married in $M$
but either (1) both of them prefer each other to their partners in $M$, or (2) one of them prefers the other over his/her partner in $M$ and the other one is
indifferent. Algorithms for these problems are well-known; our goal is to 
present LLP algorithms for these problems.

\section{Background: Lattice-Linear Predicate Detection Algorithm}
In this section, we give a self-contained description of the Lattice-Linear Predicate detection algorithm.
The reader should consult \cite{DBLP:conf/spaa/Garg20} for more details.
Let $\CC$ be the lattice of all $n$-dimensional vectors of reals greater than or equal to zero vector and less than or equal to a given vector $T$
where the order on the vectors is defined by the component-wise natural $\leq$.
The lattice is used to model the search space of the combinatorial optimization problem.
The combinatorial optimization problem is modeled as finding the minimum element in $\CC$ that satisfies a boolean {\em predicate} $B$, where
$B$ models {\em feasible} (or acceptable solutions).
We are interested in parallel algorithms to solve the combinatorial optimization problem with $n$ processes.
We will assume that the systems maintains as its state the current candidate vector $G \in \CC$ in the search lattice, 
where $G[ i]$ is maintained at process $i$. We call $G$, the global state, and $G[ i]$, the state of process $i$.

Fig. \ref{fig:poset-lattice} shows a finite poset corresponding to $n$ processes ($n$ equals two in the figure), and the corresponding lattice of all eleven global states.

\begin{figure}[htbp]
\begin{center}
\includegraphics[width=2.5in,height=1.0in]{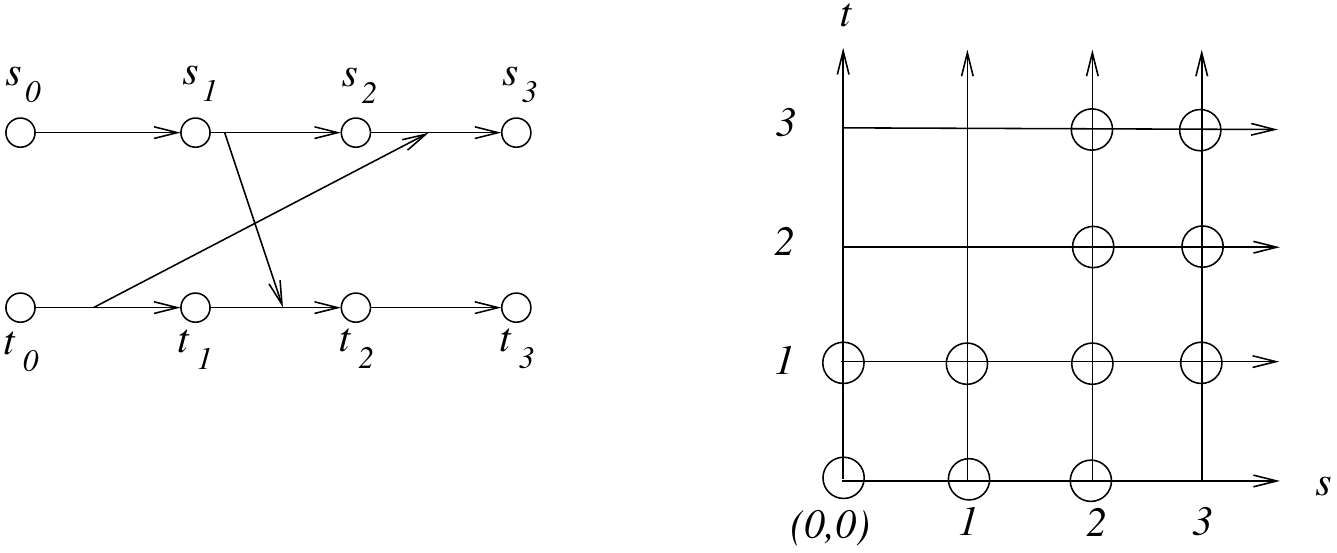}
\caption{A poset and its corresponding distributive lattice $L$ \label{fig:poset-lattice}}
\end{center}
\end{figure}

Finding an element in lattice that satisfies the given predicate $B$, is called the {\em predicate detection} problem.
Finding the {\em minimum} element that satisfies $B$ (whenever it exists) is the combinatorial optimization problem.
A key concept in deriving an efficient  predicate detection algorithm is that of a {\em
forbidden} state.  
Given a predicate $B$, and a vector $G \in \CC$, a state $G[j]$ is {\em forbidden} (or equivalently, the index $j$ is forbidden) if 
for any vector $H \in \CC$ , where $G \leq H$, if $H[j]$ equals $G[ j]$, then $B$ is false for $H$.
Formally,
\begin{definition}[Forbidden State  \cite{chase1998detection}]
  Given any distributive lattice $\CC$ of  $n$-dimensional vectors of $\RR_{\ge 0}$, and a predicate $B$, we define
  $ \forbidden(G,j,B) \equiv \forall H \in \CC : G \leq H : (G[j] = H[j]) \Rightarrow
  \neg B(H).$
\end{definition}

We define a predicate $B$ to
be {\em lattice-linear} with respect to a lattice $\CC$
 if for any global state $G$,  $B$ is false in $G$ implies that $G$ contains a
{\em forbidden state}. Formally,
\begin{definition}[lattice-linear Predicate  \cite{chase1998detection}]
A boolean predicate $B$ is {\em {lattice-linear}} with respect to a lattice $\CC$
iff
$\forall G \in \CC: \neg B(G) \Ra (\exists j: \forbidden(G,j,B))$.
\end{definition}

Once we determine $j$ such that $forbidden(G,j,B)$, 
we also need to determine how to advance along index $j$.
To that end, we extend the definition of forbidden as follows.
\begin{definition}[$\alpha$-forbidden]
 Let $B$ be any boolean predicate on the lattice $\CC$ of all assignment vectors.
 For any $G$, $j$ and positive real $\alpha > G[ j]$, we define $\mbox{forbidden}(G,j, B, \alpha)$ iff
 $$  \forall H \in \CC:H \geq G: (H[j] < \alpha) \Ra \neg B(H).  $$
\end{definition}

Given any lattice-linear predicate $B$, suppose $\neg B(G)$. This means that $G$ must be advanced on all
indices $j$ such that $\forbidden(G,j,B)$.  We use a function $\alpha(G,j, B)$ such that $\forbidden(G, j, B, \alpha(G,j, B))$ holds
whenever $\forbidden(G,j,B)$ is true.  With the notion of $\alpha(G, j, B)$, we have the Algorithm $LLP$.
The algorithm $LLP$ has two inputs --- the predicate $B$ and the top element of the lattice $T$. It returns the least vector $G$ which is less than or equal to $T$
and satisfies $B$ (if it exists). Whenever $B$ is not true in the current vector $G$, the algorithm advances on all forbidden indices $j$
in parallel. This simple parallel algorithm can be used to solve a large variety of combinatorial optimization problems
by instantiating different $\forbidden(G,j,B)$ and $\alpha(G,j,B)$.

\begin{algorithm}
\SetAlgoRefName{LLP}
 vector {\bf function} getLeastFeasible($T$: vector, $B$: predicate)\\
 {\bf var} $G$: vector of reals initially $\forall i: G[ i] = 0$;\\
 {\bf while} $\exists j: \forbidden(G,j,B)$ {\bf do}\\
    \h    {\bf for all} $j$ such that $\forbidden(G,j,B)$  {\bf in parallel}:\\
    \h\h {\bf if} $(\alpha(G,j,B) > T[j])$ then return null; \\
   \h\h    {\bf else} $G[ j] := \alpha(G,j,B)$;\\
       {\bf endwhile};\\
     {\bf return} $G$; // the optimal solution
\caption{Find the minimum vector at most $T$ that satisfies $B$\label{fig:alg-llp}}
\end{algorithm}

The following Lemma is useful in proving lattice-linearity of predicates.
\begin{lemma}\label{lem:basic-LLP}  \cite{DBLP:conf/spaa/Garg20,chase1998detection}
Let $B$ be any boolean predicate defined on a lattice $\CC$ of vectors. \\
(a) Let $f:\CC \ra \RR_{\ge 0}$ be any monotone function  defined on the lattice $\CC$ of vectors of $\RR_{\ge 0}$.
Consider the predicate
$B \equiv G[ i] \geq f(G)$ for some fixed $i$. Then, $B$ is lattice-linear.\\
(b) If $B_1$ and $B_2$ are lattice-linear then $B_1 \wedge B_2$ is also lattice-linear.
\end{lemma}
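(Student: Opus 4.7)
The plan is to prove both parts directly from the definitions of forbidden state and lattice-linearity, with part (a) exhibiting an explicit witness index and part (b) reusing a witness from one of the conjuncts.

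For part (a), I would start by assuming $\neg B(G)$, i.e., $G[i] < f(G)$, and claim that the index $i$ itself witnesses forbiddenness, i.e., $\forbidden(G, i, B)$. To verify this, I would take an arbitrary $H \in \CC$ with $H \geq G$ and $H[i] = G[i]$, and then use the monotonicity of $f$ to write $f(H) \geq f(G) > G[i] = H[i]$. This gives $H[i] < f(H)$, hence $\neg B(H)$, which is exactly what the definition of $\forbidden(G, i, B)$ requires. The main (and only) nontrivial ingredient is the monotonicity of $f$, which is used precisely to push the strict inequality $G[i] < f(G)$ forward along any upward extension; I do not expect any obstacle here.

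For part (b), I would assume $\neg(B_1 \wedge B_2)(G)$, so at least one of $\neg B_1(G)$ or $\neg B_2(G)$ holds; by symmetry, take $\neg B_1(G)$. Since $B_1$ is lattice-linear, pick $j$ with $\forbidden(G, j, B_1)$. I would then claim that the same index $j$ is forbidden for the conjunction. For any $H \geq G$ with $H[j] = G[j]$, the definition of $\forbidden(G, j, B_1)$ gives $\neg B_1(H)$, and therefore $\neg (B_1 \wedge B_2)(H)$, establishing $\forbidden(G, j, B_1 \wedge B_2)$.

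Neither argument requires a subtle construction; in both cases the witness index for the combined predicate can be chosen to equal the witness index naturally produced by the hypothesis. I would keep the write-up to a few lines per part, with the monotonicity step in (a) being the only place where anything beyond unfolding definitions is needed.
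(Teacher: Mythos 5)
Your proof is correct, and since the paper itself states this lemma without proof (deferring to the cited references), there is nothing to diverge from: your argument is the standard one, taking index $i$ as the forbidden witness in part (a) via monotonicity of $f$, and reusing the witness from the violated conjunct in part (b). Both steps check out exactly as you describe.
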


We now give an example of lattice-linear predicates for scheduling of $n$ jobs. Each job $j$ requires time $t_j$ for completion and has a set of
 prerequisite jobs, denoted by $pre(j)$, such that it can be started only after all its prerequisite jobs
have been completed. Our goal is to find the minimum completion time for each job.
We let our lattice $\CC$ be the set of all possible completion times. A completion vector $G \in \CC$ is feasible iff $B_{jobs}(G)$ holds where
$B_{jobs}(G) \equiv \forall j: (G[ j] \geq t_j) \wedge (\forall i \in pre(j): G[ j] \geq G[ i] + t_j)$.
$B_{jobs}$ is lattice-linear because if it is false, then there exists $j$ such that 
either $G[ j] < t_j$ or $\exists i \in pre(j): G[ j] < G[ i]+t_j$. We claim that $\forbidden(G, j, B_{jobs})$. Indeed, any vector $H \geq G$ cannot
be feasible with $G[ j]$ equal to $H[j]$. The minimum of all vectors that satisfy feasibility corresponds to the minimum completion time.

As an example of a predicate that is not lattice-linear, consider the predicate $B \equiv \sum_j G[ j] \geq 1$ defined on the space of 
two dimensional vectors. Consider the vector $G$ equal to $(0,0)$. The vector $G$ does not satisfy $B$. For $B$ to be lattice-linear
either the first index or the second index should be forbidden. However, 
none of the indices are
forbidden in $(0,0)$. The index $0$ is not
forbidden because the vector $H = (0,1)$ is greater than $G$, has $H[0]$ equal to $G[ 0]$ but it still satisfies $B$.
The index $1$ is also not forbidden
because $H =(1,0)$ is greater than $G$, has $H[1]$ equal to $G[ 1]$ but it satisfies $B$.

 \label{sec:prog}
 
 We now go over the notation used in description of our parallel algorithms.
Fig. \ref{fig:examples} shows a parallel algorithm for the job-scheduling problems.

The {\bf var} section gives the variables of the problem.
We have a single variable $G$ in the example shown in Fig. \ref{fig:examples}. 
 $G$ is an array of objects such that
 $G[j]$ is the state of thread $j$ for a parallel program.

The {\bf input} section gives all the inputs to the problem. These inputs are constant in the program and do not change during execution.

 The {\bf init} section is used to initialize the state of the program.
 All the parts of the program
 are applicable to all values of $j$. For example, the {\em init} section of the job scheduling program in Fig. \ref{fig:examples}
 specifies that $G[j]$ is initially $t[j]$. Every thread $j$ would initialize $G[j]$.

 The {\bf always} section defines additional variables which are derived from $G$.
 The actual implementation of these variables are left to the system. They can be viewed as
 macros. We will show its use later.

 The LLP algorithm gives the desirable predicate either by using the {\bf forbidden} predicate or {\bf ensure} predicate.
 The {\em forbidden} predicate has an associated {\em advance} clause that specifies how $G[j]$ must be advanced
 whenever the forbidden predicate is true.
 For many problems, it is more convenient to use the complement of the forbidden predicate.
 The {\em ensure} section specifies the desirable predicates of the form $(G[j] \geq expr)$ or 
 $(G[j] \leq expr)$. 
 The statement {\em ensure} $G[j] \geq expr$  simply means that whenever thread $j$ finds $G[j]$ to be less than 
 $expr$; it can advance $G[j]$ to $expr$.
 %
  Since $expr$ may refer to $G$, just by setting $G[j]$ equal to $expr$, there is no guarantee 
  that $G[j]$ continues to be  equal to $expr$ --- the value of $expr$ may change because of changes in other components.
  We use {\em ensure} statement whenever $expr$ is a monotonic function of $G$ and therefore the predicate
  is lattice-linear. 

\begin{figure}
\begin{center}
\small {
\fbox{\begin{minipage}  {\textwidth}\sf
\begin{tabbing}
xxxx\=xxxx\=xxxx\=xxxx\=xxxx\=xxxx\= \kill
$P_j$: Code for thread $j$\\
{\bf var} $G$: array[$1$..$n$] of $0..maxint$;// shared among all threads\\
{\bf input}: $t[j]: int$, $pre(j)$: list of $1..n$;\\
{\bf init}: $G[j] := t[j]$;\\
\\
{\bf \underline{job-scheduling}}:\\
\> {\bf forbidden}: $G[j] < \max \{G[i] + t[j] ~|~ i \in pre(j)\}$;\\
\> \> {\bf advance}: $G[j] := \max \{G[i] + t[j] ~|~ i \in pre(j)\}$;\\
\\
{\bf \underline{job-scheduling}}:\\
\> {\bf ensure}: $G[j] \geq \max \{G[i] + t[j] ~|~ i \in pre(j)\}$;\\
\\
{\bf \underline{shortest path from node $s$: Parallel Bellman-Ford}} \\
\> {\bf input}: $pre(j)$: list of $1..n$; $w[i,j]$: int for all $i \in pre(j)$\\
\> {\bf init}: if $(j=s)$ then $G[j] := 0$ else $G[j]$ := maxint;\\
\> {\bf ensure}:  $ G[j] \leq \min \{ G[i] + w[i,j] ~|~ i \in pre(j) \}$
\end{tabbing}
\end{minipage}
 }
 }
\caption{LLP Parallel Program for (a) job scheduling problem using forbidden predicate (b) job scheduling problem using ensure clause and (c) the shortest path problem
\label{fig:examples}}
\end{center}
\vspace*{-0.2in}
\end{figure}

\section{A Parallel Algorithm for the Constrained Stable Matching Problem}
 We now derive the algorithm for the stable matching problem using Lattice-Linear Predicates \cite{DBLP:conf/wdag/Garg17}.
We let $G[i]$ be the choice number that man $i$ has proposed to. Initially, $G[i]$ is $1$ for all men. 

 \begin{definition}
 An assignment $G$ is feasible for the stable marriage problem if (1) it corresponds to a perfect matching (all men are paired with different women) 
 and (2) it has no blocking pairs. 
 \end{definition}
 
 The predicate ``$G$ is a stable marriage'' is a lattice-linear predicate \cite{DBLP:conf/spaa/Garg20}  which
immediately gives us \ref{fig:GS}.
 The {\bf always} section defines variables which are derived from $G$.
 These variables can be viewed as
macros. For example, for any thread  $z = mpref[j][G[j]]$. This means that
whenever $G[j]$ changes, so does $z$.
If man $j$ is forbidden, it is clear that any vector in which man $j$ is matched with $z$ and the other man $i$ is matched
 with his current or a worse choice can never be a stable marriage. Thus, it is safe for man $j$ to advance to the next choice.
 \begin{algorithm}
 \SetAlgoRefName{LLP-ManOptimalStableMarriage}
$P_j$: Code for thread $j$\\
 {\bf input}: $mpref[i,k]$: int for all $i,k$; $wrank[k][i]$: int for all $k,i$;\\
  {\bf init}:   $G[j] := 1$;\\
{\bf always}:  $z = mpref[j][G[j]];$\\
 {\bf forbidden}: \\
  $\exists i:\exists k \leq G[i]:(z = mpref[i][k]) \wedge (wrank[z][i] < wrank[z][j])$\\
 \h  {\bf advance}: $G[j] := G[j]+1; $
\caption{A Parallel Algorithm for Stable Matching \label{fig:GS}}
\end{algorithm}

We now generalize \ref{fig:GS} algorithm to solve the constrained stable marriage problem.
In the standard stable matching problem, there are no constraints
on the order of proposals made by different men. 
Let $E$ be the set
of proposals made by men to women.
We also call these proposals {\em events}
which are executed by
$n$ processes corresponding to $n$ men denoted by $\{P_1 \ldots P_n\}$.
Each of the events can be characterized by a tuple $(i,j)$ that corresponds to
the proposal made by man $i$ to woman $j$.
We impose a partial order $\ra_p$ on this set of events to model the order in which
these proposals can be made. In the standard SMP, every man $P_i$ has its preference list $mpref[i]$ such that
$mpref[i][k]$ gives the $k^{th}$ most preferred woman for $P_i$. We model $mpref$ using
$\ra_p$; if $P_i$ prefers woman $j$ to woman $k$, then
there is an edge from the event $(i,j)$ to the event $(i,k)$. As in SMP, we assume that
every man gives a total order on all women.
Each process makes proposals to women in the decreasing order of preferences (similar to Gale-Shapley algorithm).

In the standard stable matching problem, there are no constraints
on the order of proposals made by different men, and $\ra_p$ can be visualized as a partial order $(E, \ra_p)$ with
$n$ disjoint chains.
We generalize the SMP problem to include external constraints
on the set of proposals.
In the constrained SMP, $\ra_p$  can relate proposals made by different men
and therefore $\ra_p$ forms a general poset $(E, \ra_p)$.
For example, the constraint that Peter's regret is less than or equal to John can be modeled by adding $\ra_p$ edges
as follows. For any regret $r$, we add an $\ra_p$ edge from the proposal by John with regret $r$ to the
proposal by Peter with regret $r$.
We draw $\ra_p$ edges in solid 
edges as shown in Fig. \ref{fig:csmp-model}.




Let $G \subseteq E$ denote the  global state of the system. A global state $G$ is simply the subset of events executed in the computation
such that it preserves the order of events within each $P_i$. 
Since all events executed by a process $P_i$ are totally ordered,
 it is sufficient to record the number of events executed 
by each process in a global state. 
Let $G[i]$ be the number of proposal made by $P_i$. 
Initially, $G[i]$ is $1$ for all men.
If $P_i$ has made $G[i]>0$ proposals, then 
$mpref[i][G[i]]$ gives the identity of the woman last proposed by $P_i$. 
We let $event(i, G[i])$ denote the event in which $P_i$ makes a proposal to $mpref[i][G[i]]$.
We also use $succ(event(i, G[i]))$ to denote the next proposal made by $P_i$, if any.


For the constrained SMP, we have $\ra_p$ edges that relate proposals of different processes.
The graph in Fig. \ref{fig:csmp-model} shows an example of using $\ra_p$ edges in the constrained SMP.
For this problem, we work with {\em consistent global states} (or order ideals \cite{davey,Gar:2015:bk}).
A global state $G \subseteq E$ is {\em consistent} if
$ \forall e,f \in E: (e \ra_p f) \wedge (f \in G) \Ra (e \in G).$
In the context of constrained SMP, it is easy to verify that $G$ is consistent iff 
for all $j$, there does not exist $i$ such that $$succ(event(j, G[j])) \ra_p event(i, G[i]).$$
It is well known that the set of all consistent global states of a finite poset forms a finite
distributive lattice \cite{davey,Gar:2015:bk}. We use the lattice of all consistent global states as $\CC$ for
the predicate detection.

In the standard SMP, women's preferences
are specified by preference lists $wpref$ such that $wpref[i][k]$ gives the $k^{th}$ most preferred man for woman $i$.
It is also convenient to define $wrank$ such that $wrank[i][j]$ gives the choice number $k$ for which $wpref[i][k]$ equals $j$, i.e.,
$wpref[i][k] = j$ iff $wrank[i][j] = k$.
We model these preferences using edges on the computation graph as follows. If an event $e$ 
corresponds to a proposal by $P_i$ to woman $q$ and she prefers $P_j$, then we add a dashed
 edge
from $e$ to the event $f$ that corresponds to $P_j$ proposing to woman $q$.
The set $E$ along with the dashed edges also forms a partial order $(E, \ra_w)$ where 
$e \ra_w f$ iff both proposals are to the same woman and that woman prefers the proposal
$f$ to $e$. 
With $((E, \ra_p), \ra_w)$ we can model any SMP specified using $mpref$ and $wpref$.

Figure \ref{fig:cuts} gives an example of a standard SMP problem in Fig. \ref{fig:SMP} in our model. To avoid cluttering the
figure, we have shown preferences of all men but preferences of only two of the women.
Fig \ref{fig:csmp-model} gives an example of a constrained SMP. Since both $\ra_p$ and $\ra_w$ are
transitive relations, we draw only the transitively reduced diagrams.

\begin{figure}
\begin{tabular}{l | l l l l |   l l   l | l l l l |}
mpref & & & &  & & & wpref \\
P1   &   w4 & w1 & w2 & w3  &  & & w1 & P4 & P1 & P3 & P2\\
P2   &   w2 & w3 & w1 & w4 &  & & w2 & P1 & P4 & P2 & P3\\
P3   &   w3 & w1 & w4 & w2 &  & & w3 & P1 & P2 & P4 & P3\\
P4   &   w2 & w4 & w3 & w1 &  & & w4 & P3 & P1 & P4 & P2\\
\end{tabular}
\caption{\label{fig:SMP}  Stable Matching Problem specified using men preference list (mpref) and
women preference list (wpref).}
\end{figure}

\begin{figure}[htbp]
\vspace*{-0.3in}
\begin{center}
\includegraphics[height=3.5in]{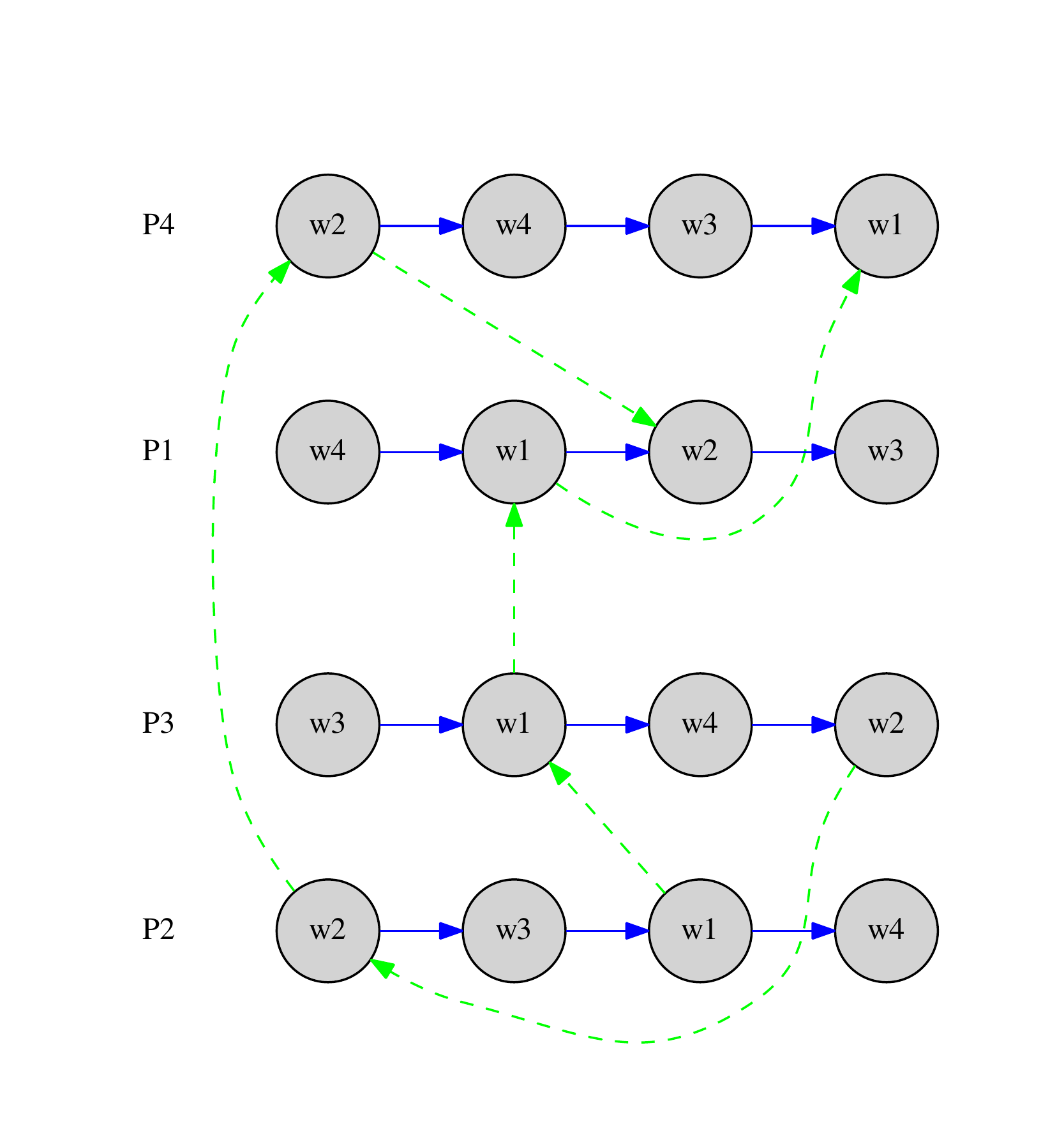}
\caption{\label{fig:cuts}  Men preferences are shown in blue solid edges. Preferences of women 1 and 2  are shown in dashed green edges.
In the standard SMP graph, there are no blue edges from any event in $P_i$ to any event in $P_j$ for distinct $i$ and $j$.}
\end{center}
\vspace*{-0.2in}
\end{figure}

\begin{figure}[htbp]
\vspace*{-0.3in}
\begin{center}
\includegraphics[height=3.5in]{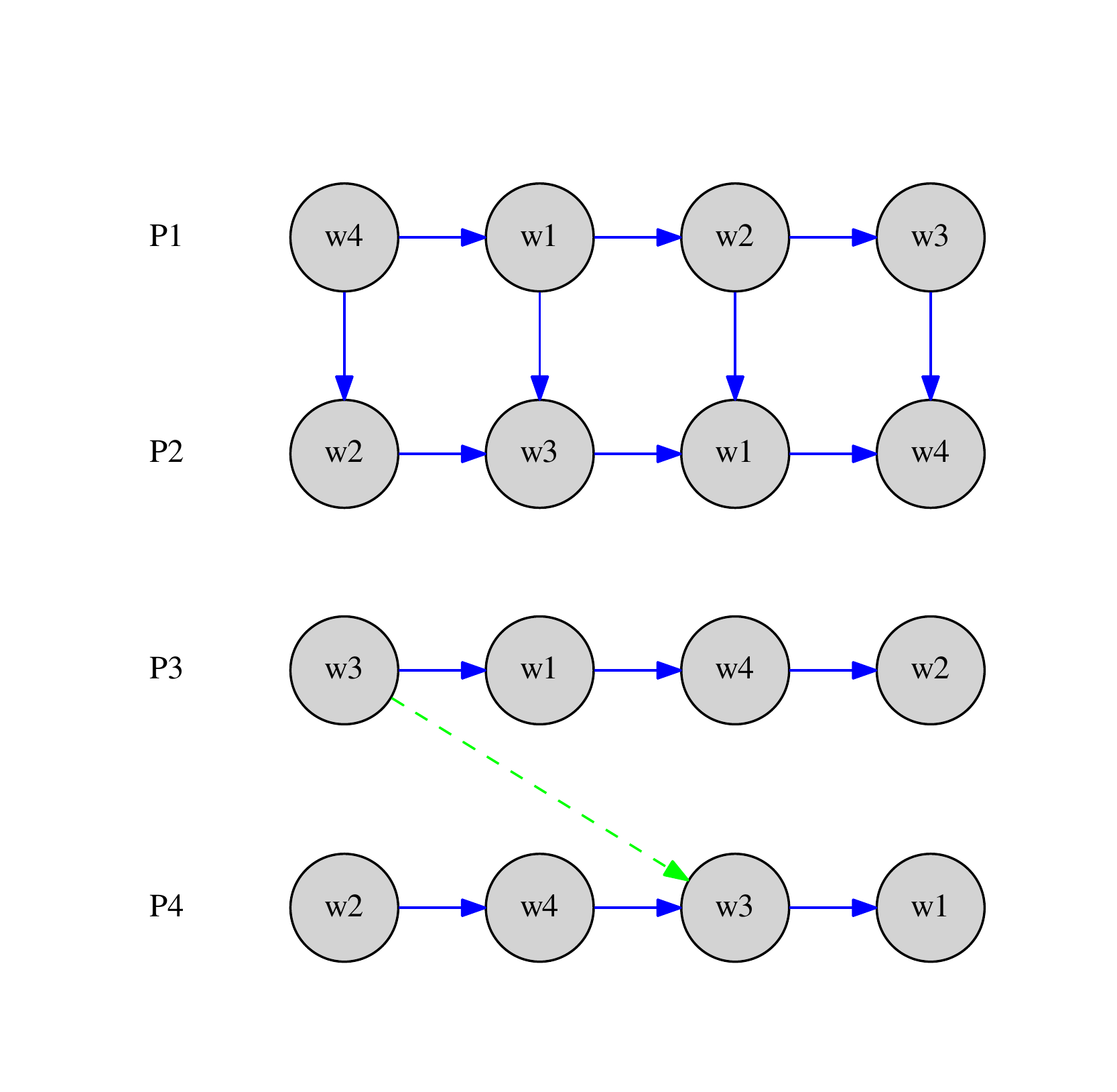}
\caption{\label{fig:csmp-model}  Constrained SMP Graph corresponding to constraint that the {\em regret} for $P_2$ is less than or equal to that of $P_1$.
It also shows the preference of $w3$ of $P4$ over $P3$.}
\end{center}
\vspace*{-0.2in}
\end{figure}



The above discussion motivates the following definition.
\begin{definition}[Constrained SMP Graph]
Let $E = \{(i,j) | i \in [1..n] \mbox{ and } j \in [1..n] \}$.
A Constrained SMP Graph $((E, \ra_p), \ra_w)$ is a directed graph on $E$ with two sets of edges $\ra_p$ and $\ra_w$ with the following
properties: (1) $(E, \ra_p)$ is a poset such that the set $P_i = \{ (i,j) | j \in [1..n] \}$ is a chain for all $i$, and
(2) $(E, \ra_w)$ is a poset such that the set $Q_j = \{ (i,j) | i \in [1..n] \}$ is a chain for all $j$  and there is no $\ra_w$ edge
between proposals to different women, i.e.,
for all $i,j,k,l: (i,j) \ra_w (k,l) \Ra (j=l)$.
\end{definition}

Given a global state $G$, we define the {\em frontier} of $G$ as the set of maximal events executed by 
any process. The frontier includes only the last event executed by $P_i$ (if any). Formally,
$\frontier(G) = \{ e \in G ~|~ \forall f \in G$ such that  $f \neq e$, $f$ and $e$ are executed by $P_i$ implies $f \ra_p e$ \}.
We call the events in $G$ that are not in $\frontier(G)$ as pre-frontier events.

We now define the feasible predicate on  global states as follows.
\begin{definition}[feasibility for marriage]
A global state $G$ is  feasible for marriage iff (1)  $G$ is a consistent global state, and (2) there is no dashed edge ($\ra_w$) from a frontier event to any event of $G$ (frontier or pre-frontier).
Formally,
$B_{marriage}(G) \equiv$\\
$ consistent(G) \wedge  (\forall e \in \frontier(G), \forall g \in G: \neg (e \ra_w g).$

\end{definition}
\remove{
A green edge from a frontier event $(i,j)$ to another frontier event $(k,l)$ would imply that $j$ is equal to $l$
and both men $i$ and $k$ are assigned the same woman which violates the matching property.
For example, in Fig. \ref{fig:csmp-model}, $(P_3,w_3)$ and $(P_4, w_3)$ cannot both be frontier events of an feasible global state.
A green edge from a frontier event $(i,j)$ to a pre-frontier event $(k,l)$ in a global state $G$ implies that woman $j$ prefers man $k$ to her partner $i$ and man $k$ prefers $j$ to his partner in $G$.
For example, in Fig. \ref{fig:csmp-model}, $(P_3,w_3)$ and $(P_4, w_1)$ cannot both be frontier events of an feasible global state because there is
a green edge from $(P_3, w_3)$ to a pre-frontier event $(P_4, w_3)$.
}

It is easy to verify that the problem of finding a stable matching is the same as finding a global state that 
satisfies the predicate $B_{marriage}$ which is defined purely in graph-theoretic terms on the
constrained SMP graph.
The next task is to show that $B_{marriage}$ is lattice-linear.

\begin{theorem}\label{lem:CSMP-forbid}
For any global state $G$ that is not a constrained stable matching,
there exists  $i$ such that $\forbidden(G,i,B_{marriage})$.
\end{theorem}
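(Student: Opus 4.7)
The plan is to prove the statement by direct case analysis on why $B_{marriage}(G)$ fails, exhibiting in each case a concrete index $j$ satisfying $\forbidden(G, j, B_{marriage})$. Since $B_{marriage}(G)$ is the conjunction of $consistent(G)$ with the absence of any $\ra_w$ edge from a frontier event to an event of $G$, at least one of these two conjuncts must fail; in each case I extract a witness and use it to pin down a forbidden coordinate.

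In the first case, suppose $G$ is inconsistent. By the characterization given just above the theorem, there exist indices $j$ and $i$ with $f := succ(event(j, G[j])) \ra_p event(i, G[i])$; that is, $f \notin G$ even though its $\ra_p$-successor already lies in $G$. I will show $\forbidden(G, j, B_{marriage})$ holds. For any $H \in \CC$ with $G \leq H$ and $H[j] = G[j]$, we have $event(i, G[i]) \in H$ because $H[i] \geq G[i]$, while $f \notin H$ because $H[j] = G[j]$ forces $P_j$ to have made no new proposals in $H$. Hence $H$ is also inconsistent, so $\neg B_{marriage}(H)$.

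In the second case, $G$ is consistent but some frontier event $e = event(i, G[i])$ has a $\ra_w$-edge to some event $g = event(k, \ell) \in G$ with $\ell \leq G[k]$. I claim $\forbidden(G, i, B_{marriage})$. For any $H \geq G$ with $H[i] = G[i]$, the event $e = event(i, H[i])$ is still the topmost event of $P_i$ in $H$, hence $e \in \frontier(H)$; and $g$ remains in $H$ since $\ell \leq G[k] \leq H[k]$. Because $e \ra_w g$ is a property of the static constrained SMP graph, the edge persists in $H$, so $H$ again carries a $\ra_w$-edge from a frontier event to an event of $H$, giving $\neg B_{marriage}(H)$.

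The main obstacle is really just the bookkeeping for the two cases: tracking which events remain in the enlarged vector $H$ and which frontier structure is preserved. The slightly delicate part is Case~2, where freezing only the single coordinate $H[i] = G[i]$ must keep $e$ in $\frontier(H)$ even though other coordinates of $H$ may have grown; this works precisely because membership of $e$ in the frontier of $H$ depends only on $H[i]$, and the $\ra_w$ relation is part of the input graph rather than something that varies with $G$. Once these observations are in place, the forbidden index is read off directly from the witness of infeasibility.
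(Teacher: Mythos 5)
Your proof is correct, and it reaches the same witnesses as the paper's, but by a somewhat different decomposition. The paper splits the failure of feasibility into three cases phrased semantically: $G$ inconsistent, $G$ consistent but not a matching (two men on the same woman), and $G$ a matching but with a blocking pair; in the latter two cases it argues about preferences via $wrank$ and the men's assignments. You instead case-split directly on the two conjuncts of the stated graph-theoretic definition of $B_{marriage}$, so your single Case~2 (a $\ra_w$ edge from a frontier event $e=event(i,G[i])$ to some $g\in G$) uniformly subsumes both the ``not a matching'' case ($g$ a frontier event) and the ``blocking pair'' case ($g$ pre-frontier). The forbidden index you extract --- the source process $i$ of the offending $\ra_w$ edge, i.e., the man the woman likes less --- coincides with the paper's choice in both of its subcases, and your Case~1 witness (the process whose pending successor event is $\ra_p$-below something already in $G$) is the same process the paper freezes. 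What your route buys is that the verification of $\forbidden(G,i,B_{marriage})$ becomes purely monotonicity bookkeeping: $e$ stays in $\frontier(H)$ because $H[i]=G[i]$, $g$ stays in $H$ because $H[k]\geq G[k]$, and the $\ra_w$ edge is a constant of the input graph; no re-translation into matchings and blocking pairs is needed. What the paper's route buys is that it makes explicit why the graph condition encodes stability, which your proof takes for granted via the (asserted but unproved in the paper) equivalence between $B_{marriage}$ and ``constrained stable matching.'' Strictly speaking, since the theorem is stated in terms of ``not a constrained stable matching'' while your argument starts from $\neg B_{marriage}(G)$, you are implicitly relying on that equivalence; this is the same reliance the paper itself makes in the opposite direction, so it is not a gap, but it is worth one sentence acknowledging it.
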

\begin{proof}

First suppose that $G$ is not consistent, i.e., there exists $f \in G$ such that
there exists $e \not \in G$ and $e \ra_p f$. Suppose that $e$ is on $P_i$. Then, 
$\forbidden(G,i,B)$ holds because any global state $H$ that is greater than $G$ cannot be consistent
unless $e$ is included.

Next, suppose that $G$ is a consistent global state but the assignment for $G$ is not a matching.
This means that for some distinct $i$ and $j$, both $G[i]$ and $G[j]$ refer to the same woman, say $w$.
Suppose that $w$ prefers $j$ to $i$, then we claim $\forbidden(G, i, B)$.
Consider any $H$ such that $H[i] = G[i]$ and $H[j] \geq G[j]$.
First consider the case $H[j] = G[j]$. In this case, the same woman $w$ is still assigned to 
two men and hence $H$ is not a stable matching. 
Now consider the case $H[j] > G[j]$. In this case, 
the woman $w$ prefers man $j$ to $i$,
and the man $j$ prefers $w$ to the woman assigned in $H[j]$ violating stability.

Now suppose that the assignment for $G$ is a constrained matching but not stable. 
Suppose that $(j,w)$ is a blocking pair in $G$. Let $i$ be assigned to $w$ in $G$
 (i.e., the woman corresponding to $G[i]$ prefers man $j$ to $i$,
and the man $j$ also prefers her to his assignment).
We claim that $\forbidden(G, i, B)$.
Consider any $H$ such that $H[i] = G[i]$ and $H[j] \geq G[j]$.
In this case, $(j,w)$ continues to be blocking in $H$.
The woman $w$ prefers man $j$ to $i$,
and the man $j$ prefers $w$ to the woman assigned in $H[j]$.

%
\end{proof}

We now apply the detection of lattice-linear global predicates for the constrained stable matching. 

 \begin{algorithm}
 \SetAlgoRefName{LLP-ConstrainedStableMarriage}
$P_j$: Code for thread $j$\\
 {\bf input}: $mpref[i,k]$: int for all $i,k$; $wrank[k][i]$: int for all $k,i$;\\
  {\bf init}:   $G[j] := 1$;\\
{\bf always}:  $z = mpref[j][G[j]];$\\
 {\bf forbidden}: \\
  $\exists i: \exists k \leq G[i]: (z = mpref[i][k]) \wedge (wrank[z][i] < wrank[z][j])$ $\vee (\exists i: succ(event(j, G[j])) \ra_p event(i, G[i]]))$ \\
\h  {\bf advance}: {\bf if} $(G[j] < n)$ then $G[j] := G[j]+1; $\\
\h \h {\bf else} print(``no constrained stable marriage'')
\caption{A Parallel Algorithm for the Constrained Stable Matching \label{fig:alg}}
\end{algorithm}

\remove {
\begin{figure}[htb]
\begin{center}
\fbox{\begin{minipage}  {\textwidth}\sf
\begin{tabbing}
xx\=xxxx\=xxxx\=xxxx\=xxxx\=xxxx\= \kill
{\bf Algorithm Constrained-Stable-Matching}:
Use Algorithm $LLP$ where\\
$T$ = $(n,n,...,n)$; //maximum number of proposals at $P_i$\\
$z = mpref[j][G[j]]$; //current woman assigned to man $j$\\
$\forbidden(G, j, B_{marriage}) \equiv   (G[j] = 0)$ \\
 \>  $\vee (\exists i: \exists k \leq G[i]: (z = mpref[i][k]) \wedge (wrank[z][i] < wrank[z][j]))$\\
 \> $\vee (\exists i: succ(event(j, G[j])) \ra_p event(i, G[i]]))$ \\
$\alpha(G,j,B_{marriage}) = (G[j]+1)$;
\end{tabbing}
\end{minipage}
} 
\end{center}
\vspace*{-0.2in}
\caption{An efficient algorithm to find the man-optimal constrained stable matching  \label{fig:alg}}
\vspace*{-0.1in}
\end{figure}
}

%
%

The algorithm to find the man-optimal constrained stable marriage is shown in Fig. \ref{fig:alg}.
From the proof of Theorem \ref{lem:CSMP-forbid}, we get the
following implementation of $\forbidden(G, j, B_{marriage})$ in Fig. \ref{fig:alg}.
The first disjunct holds when the woman $z$ assigned to man $j$ is such that there exists a man $i$
who is either (1) currently assigned to $z$ and woman $z$ prefers man $i$ , or (2) currently assigned
to another woman but he prefers $z$ to the current assignment. The first case holds when $k=G[i]$ and the
second case holds when $k < G[i]$.
The first case is equivalent to checking if a dashed edge exists from $(j, z)$ to a frontier event.
The second case is equivalent to checking if a dashed edge exists to a pre-frontier event.
The second disjunct checks that the assignment for $G$ satisfies all external constraints with respect to $j$.

Our algorithm generalizes the Gale-Shapley algorithm in that it allows specification of external constraints.

We now show an execution of the algorithm on the CSMP in Fig. \ref{fig:csmp-model}.
Since every $P_i$ must make at least one proposal, we start with 
the first proposal for every $P_i$. The corresponding assignment is
$[w_4, w_2, w_3, w_2]$, i.e., $P_1$ is assigned $w_4$, $P_2$ is assigned $w_2$ and so on.
In this global state $G$, the second component is forbidden. This is because
$w_2$ prefers $P_4$ over $P_2$. We advance on  $P_2$ to get the global state
$[w_4, w_3, w_3, w_2]$. Now, 
because $w_3$ prefers $P_2$ over $P_3$, $P_3$ must advance.
We get the global state
$[w_4, w_3, w_1, w_2]$. This is a stable matching. However, it does not satisfy
 the constraint that the regret of $P_2$ is less than or equal to that of  $P_1$.
 Here, $P_1$ is forbidden and $P_1$ must advance. We now get the global state
$[w_1, w_3, w_1, w_2]$ which is not a matching. 
Since $w_1$ prefers $P_1$ over  $P_3$, $P_3$ must advance.
We reach the global state
$[w_1, w_3, w_4, w_2]$ which satisfies the constrained stable matching.

%



We have discussed man-oriented constrained stable marriage problem. 
One can also get an LLP algorithm the for woman-oriented constrained stable marriage problem.
The paper \cite{GarHu20} gives an algorithm $\beta$ that does the downward traversal in the proposal lattice in search of a stable marriage.
When men and women are equal then such a traversal
can be accomplished by switching the roles of men and women. However, in \cite{GarHu20} is is assumed that
the number of men $n_m$ may be much smaller than the number of women $n_w$.  It has the
time complexity of $O(n_m^2 + n_w)$. Switching the roles of men and women is not feasible
without increasing the complexity of the algorithm.

\section{A Distributed Algorithm for the Constrained Stable Matching Problem}
Although the standard SMP has been studied in a distributed system setting (e.g., \cite{brito2005distributed,kipnis2009note}), 
we study the constrained SMP in a distributed system setting. 
Our goal is to show how a parallel LLP algorithm can be converted to a distributed program.
We assume an asynchronous system in which all channels are FIFO and reliable and
that processes do not crash.

We assume that each man and woman knows only his or her preference lists. 
$P_i$ corresponds to the computation at man  $i$ and $Q_i$ corresponds to the computation at woman $i$.
Each process $P_i$ is responsible for updating its own component in $G[i]$.
For the LLP algorithm, we will assume that the only variable at $P_i$ is $G$ and
all other variables such as $mpref$ are constants.
In addition, each man is given
a list of prerequisite proposals for each of the women that he can propose to.
In terms of the constrained-SMP graph, this corresponds to every man knowing the
incoming solid
edges for the chain that corresponds to that man in the graph.
From $mpref$, one can also derive $mrank$, the rank $P_i$ assigns to each woman.

The process $Q_i$ has $wpref$, preferences of woman $i$. However, it is more convenient to
keep $wrank$, the rank $Q_i$ assigns to each man. This information is input to $Q_i$.
The only variable a woman $Q_i$
maintains is the {\em partner}. Note that given $G$, the partner for each woman can be derived.
However, in a distributed system setting it is more efficient to maintain the partner at each woman.

Whenever $G[i]$ is updated by $P_i$, we will assume that $P_i$ sends a message
to other relevant processes informing them about the update. Each process keeps
enough information to be able to evaluate its forbidden predicate. Since the message
transfer takes time, the data structures are not necessarily up to date at each process.
In particular $P_j$ may have an old value of $G[i]$ maintained at $P_i$.
We show that the LLP algorithm has the advantage that it works correctly
despite the fact that processes use old values of $G$. Each process evaluates
its forbidden predicate and advances its state whenever the forbidden predicate is true.
The algorithm terminates when no process is forbidden.  In a distributed system setting, 
we need some process to determine that the system has reached such a state.
A possible solution for running LLP algorithms in a distributed environment is to 
run it as a diffusing computation\cite{dijkstra1980termination}  and use a termination detection algorithm
along with the LLP algorithm.

We now present a diffusing computation for
solving the constrained SMP.
We adopt the standard rules of a diffusing computation.
A {\em passive} process can become {\em active} only on receiving messages, and only an active process can send a message.
We assume the existence of a process called environment that starts the algorithm by
sending {\em initiate} messages to all men. In our algorithm shown in Fig. \ref{fig:CSMP},

There are four types of messages used in this algorithm. There are exactly $n$ {\em initiate} messages sent
by the environment to all men. Each man can send two types of messages. He sends {\em propose} messages
to women one at a time in the order given by $mpref$. These messages are sent whenever the current state of the man
is forbidden and he needs to advance to the next woman.
A man may sometimes skip proposing some women as
explained later. A man also sends {\em advance} messages to other men which may force other men to skip
certain proposals to satisfy external constraints. 

A woman acts only when she receives a {\em propose} message 
from a man $j$.
On receiving a {\em propose} message, if she is currently not engaged, she 
gets engaged to man $j$. If she is engaged to a man and the new proposal is preferable to her current
partner then she sends a {\em reject} message to the current partner. If the new proposal is less preferable, then
she sends a {\em reject} message to the proposer. The variable {\em partner} indicates her partner at any point.
If the value of {\em partner} is zero, then that woman is free; otherwise, she is engaged.
Note that a woman never sends any {\em accept} message.
The algorithm is based on the assumption that if a woman has received a proposal and not rejected it, then
she has accepted the message (the algorithm assumes that no messages are lost).

We now explain the behavior of men for each message type he receives as shown in Fig. \ref{fig:CSMP}.
On receiving an {\em initiate} message from the environment,  we know that any assignment must have
at least one proposal from that man. To satisfy external constraints, all proposals that are prerequisite must also be made.
Hence, the man sends an {\em advance} message to all men with prerequisite proposals. He then sends a proposal to his top choice.
On receiving a {\em reject} message, he first checks if the {\em reject} message is from his current partner. Since a man
may have advanced to a different proposal, there is no need for any action if the {\em reject} message is from an earlier proposal.
If the {\em reject} message is for the current proposal, then
the man knows that he must make another proposal. If he is out of proposals, then
he announces that there is no stable marriage with external constraints. Otherwise, he moves on to the next best proposal after
sending out {\em advance} messages to all men with prerequisite proposals.
On receiving an {\em advance} message with woman $w$, the man must ensure that he has made a proposal to woman $w$.
If he has already made a proposal to $w$, then there is nothing to be done; otherwise, 
he skips all proposals till
he gets to his choice which corresponds to $w$. Next, he makes a proposal to $w$ thereby satisfying external constraints.

%

\begin{figure}[htbp]\begin{center}
\begin{tabbing}
x\=xxx\=xxx\=xxx\=xxx\=xxx\= \kill
${\bf P_i}$:: // Process for Man $i$\\
\> {\bf input}\\
\> \>   $mpref$: array[$1$..$n$]  of $1..n;$ // men's preferences\\
\> \>   $mrank$: array[$1$..$n$]  of $1..n;$ // rank of each of the women by man\\
\> \>  // $mrank$ can be derived from $mpref$\\
\> \>   $prerequisite$: array[$1$..$n$] of list of proposals; \\
\> \> // list of proposals that must be executed before $mpref[i]$\\ 
\> {\bf var}\\
\> \>   $G_i:1..n$ initially $1$; // proposal number by $P_i$ \\
\\
\> Upon receiving a message ``initiate'' from environment;\\
\> \>    for each $(m,w) \in prerequisite[G_i]$\\
\> \> \>  send $(``advance" , w)$ to $P_m$;\\
\> \>  send $(``proposal", i)$ to woman $mpref[G_i]$;\\
\\
\> Upon receiving a message $(``reject", j)$:\\
\> \>   {\bf if} $(mpref[G_i] = j)$ {\bf then} // rejected by current partner\\
\> \> \>   {\bf if} $(G_i = n)$ {\bf then}\\
\> \> \> \>  Announce ``no constrained stable marriage possible" ;\\
\> \> \>   {\bf else} \\
\> \> \> \>  $G_i := G_i+1$;\\
\> \> \> \>   for each $(m,w) \in prerequisite[G_i]$\\
\> \> \> \>  \> send $(``advance" , w)$ to $P_m$;\\
\> \> \> \>  send $(``proposal", i)$ to woman $mpref[G_i]$;\\

\\
\> Upon receiving a message $(``advance", q)$:\\
\> \>        {\bf while} $(mrank[q] > G_i)$ \\
\> \> \>        $G_i := G_i+1$\\
\> \> \>         for each $(m,w) \in prerequisite[G_i]$\\
\> \> \>  \>       send $(``advance" , w)$ to $P_m$;\\
\> \>        {\bf endwhile};\\
\> \>  send $(``proposal", i)$ to woman $mpref[G_i]$;\\
\\
${\bf Q_i}$:: // Process for Woman $i$\\
\> {\bf input}\\
\> \>   $wrank$: array[$1$..$n$]  of $1..n;$ // rank of each man by the woman \\
\> {\bf var}\\
\> \>   $partner$: $0..n;$ initially $0$ // current partner \\
\\
\> Upon receiving a message $(``proposal'', j)$:\\
\> \>   {\bf if} $(partner = 0)$ {\bf then}\\
\> \> \>  $partner := j$;\\
\> \>   {\bf else if} $(wrank[j] < wrank[partner]) $ {\bf then}\\
\> \> \> send $(``reject", i)$ to $P_{partner}$;\\
\> \> \>  $partner := j$;\\
\\
\\
{\bf Environment}::\\
 Process that (1) initiates the diffusing computation and \\
 (2) detects Termination\\
 \\
 \> send ``initiate'' message to all $P_i$\\
\> Upon Detecting Termination of Diffusing Computation\\
 \> \> Announce the current assignment as a stable marriage \\
 \> \> satisfying external constraints. Halt\\
\end{tabbing}

\end{center}
\caption{{A diffusing distributed computation algorithm for constrained SMP} for men $P_i$ and women $Q_i$\label{fig:CSMP}}
\end{figure}

Observe that when a man $P_i$ advances, he does not inform his existing partner, if any. Since the number of men and women are same, his partner
will eventually get a proposal from someone who she prefers to $P_i$ if there exists a constrained stable matching.
His partner $q$ can never be matched with $P_j$ such that $q$ prefers $P_i$ over $P_j$. Otherwise, 
we have a blocking pair: both $q$ and $P_i$ prefer each other over their partners.

If there are no external constraints, then there are no {\em advance} messages, and the algorithm is 
a distributed version of the Gale-Shapley algorithm. 
Even in the presence of external constraints, the algorithm shares the following properties with the Gale-Shapley
algorithm. As the algorithms progress, the partner for a man can only get worse and the partner for a woman can only get better.
Both these properties are direct results of the way men send their proposals and the way women respond to proposals.

There are also some crucial differences from the Gale-Shapley algorithm.
In the Gale-Shapley algorithm, once a woman is engaged she continues to be engaged.
For any woman $w$, the predicate that there exists a man such that he is assigned to $w$ is a stable predicate.
As a result, the termination of Gale-Shapley (sequential or distributed version) is easy to detect. When all women have been proposed to, the system
has reached a stable matching. However, due to external constraints, it is not true in CSMP that once a woman is engaged she
continues to stay engaged. The man who she was engaged to, may be required to advance on receiving an {\em advance} message
and then that woman is no longer logically assigned to that man. For the constrained SMP algorithm, we need additional messages to detect termination.
It is the environment process that initiates the computation and detects termination 
of the computation. We assume that a termination detection algorithm such as that of Dijkstra and Scholten \cite{dijkstra1980termination} 
is running in conjunction with the CSMP algorithm. Termination in a diffusing computation corresponds to the condition 
that all processes are passive and there are no messages in-transit. 

We now show that the algorithm in Fig. \ref{fig:CSMP} correctly finds the least assignment (or man-optimal) constrained stable matching whenever it exists.
The correctness follows from the following invariants.
\begin{lemma}
Any assignment $M$ in which $M[i] < G_i$ for any $P_i$ cannot be a constrained stable marriage.
\end{lemma}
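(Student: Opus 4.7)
The plan is to prove the invariant by induction on the sequence of updates to the variables $G_i$ (across all processes). In the base case, every $G_i$ is initialized to $1$, and since any legal assignment satisfies $M[i] \geq 1$, the condition $M[i] < G_i$ cannot hold. For the inductive step, consider an update that advances $G_i$ from $k$ to $k+1$; assuming the invariant held for every process just before, I must rule out $M[i] = k$ for any constrained stable marriage $M$.

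The update is triggered either by processing a ``reject'' message for the current partner, or by the \textbf{while} loop in the ``advance'' handler. For the reject case, the reject was sent by the woman $w = mpref[i][k]$ because she received a strictly preferred proposal from some $P_j$. At the moment $P_j$ proposed to $w$, his local variable satisfied $G_j = mrank[j][w]$; by the inductive hypothesis (which is monotone in $G$), we have $M[j] \geq mrank[j][w]$ for any CSM $M$. Now suppose, for contradiction, that $M[i] = k$, i.e., $M$ matches $P_i$ with $w$. Then $P_j$ is matched in $M$ to some woman he prefers strictly less than $w$, while $w$ prefers $P_j$ over $P_i$; hence $(P_j, w)$ is a blocking pair, contradicting the stability of $M$.

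The ``advance'' case is the main obstacle, since it requires translating the algorithm's \emph{prerequisite} lists into the partial order $\ra_p$ of the constrained SMP graph, and then invoking consistency of the global state corresponding to $M$. The loop step from $k$ to $k+1$ occurs because $P_i$ received $(``advance", q)$ with $mrank[q] > k$; such a message is only dispatched when some $P_m$ incremented its own $G_m$ to a value such that $(i, q) \in prerequisite[G_m]$, which in the graph means $event(i, mrank[i][q]) \ra_p event(m, G_m)$. By the inductive hypothesis applied to $P_m$ at that moment, $M[m] \geq G_m$, so the event $event(m, G_m)$ lies in (or precedes the frontier of) the consistent global state that realizes $M$. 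By consistency (Definition of consistent global state applied to $\ra_p$), the prerequisite event $event(i, mrank[i][q])$ must also lie in that global state, yielding $M[i] \geq mrank[i][q] > k$, i.e., $M[i] \geq k+1 = G_i$.

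Combining the two cases, no single update can violate the invariant, so it holds throughout the execution, establishing the lemma. The subtle point to flag is that the argument relies on the \emph{local} current value of $G_j$ at process $P_j$ rather than any possibly stale copy cached elsewhere in the network; since each $P_i$ only updates its own $G_i$ and the inductive hypothesis is a property of the true values of $G$, the asynchrony and message-in-flight issues do not interfere with the proof.
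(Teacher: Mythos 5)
Your proof is correct and follows essentially the same route as the paper's: induction on the updates to $G_i$, with the reject case handled by exhibiting the blocking pair $(P_j, w)$ and the advance case handled by appealing to the prerequisite constraints induced by $\ra_p$. You supply somewhat more detail than the paper does (in unwinding the advance-message mechanism and in explicitly invoking the inductive hypothesis at $P_j$), but the decomposition and the key ideas are identical.
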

\begin{proof}
Initially, the invariant is true because $G_i$ is initialized to $1$ and $M[i]<1$ implies that $P_i$ has not proposed to any one.
There are only two reasons the  $G_i$ variable is incremented.
Either the woman corresponding to the current proposal has sent a {\em reject} or
a man has sent a message to {\em advance} beyond the current woman.
We first consider the case when the current proposal was rejected by the woman $q$.
It is sufficient to show that any assignment in which this man is assigned $q$ cannot be a stable marriage.
Suppose $q$ rejected $P_i$ in favor of $P_j$. If $P_j$ is also assigned to $q$ in $G$, then
it is not a matching. If $P_j$ is assigned to a woman that he proposes to later, then we have that $q$ assigned to
$P_i$ prefers $P_j$ and $P_j$ prefers $q$ to the woman he is assigned.
If $G_i$ is advanced because of an {\em advance} message from $P_j$, then any assignment in which $M[i] < G_i$ does
not satisfy prerequisite constraints due to $\ra_p$.
\end{proof}

To show that the algorithm gives a stable matching on termination, if it exists, we show that
the number of successful proposals is equal to $n$ on termination. 
A proposal is defined to be successful
if it is not rejected by a woman and not advanced over by a man and thereby rejected by the man.
We start the algorithm by each process sending out a proposal. Thus, there are $n$ proposals to start with.
Any proposal that is rejected by a woman leads to another proposal if the reject message is not in transit.
Any proposal that is skipped due to prerequisite constraints also leads to another proposal. 
So either a man
runs out of proposals, or the computation has not terminated until every man has made a successful proposal.
This assertion gives us
\begin{lemma}
If the algorithm announces that the current assignment denotes stable marriage, then the assignment given by $G$
is a stable matching satisfying external constraints, i.e., if $P_i$ is paired with $mpref[i][G_i]$, then 
the assignment satisfies constrained stable matching.
\end{lemma}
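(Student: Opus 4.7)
The plan is to show that at the announced termination point three conditions hold on $G$: consistency with $\ra_p$, the induced assignment $i \mapsto mpref[i][G_i]$ being a perfect matching, and the absence of blocking pairs. Taken together these are exactly $B_{marriage}(G)$, which by Theorem~\ref{lem:CSMP-forbid} certifies a constrained stable matching. Two monotonicity invariants, immediate from inspection of the code, drive everything: each $G_i$ only increases over time, and each woman's $partner$ variable, once nonzero, is only overwritten by a strictly preferred man.

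I would handle consistency first, by induction on advance events. Before $P_j$ raises $G_j$ past any index $k$, the code sends $(``advance", w)$ to every prerequisite proposer $(m,w) \in prerequisite[k]$. FIFO reliability together with the quiescence property of the termination-detection layer (no in-transit messages at termination) guarantees that each such message has been delivered and processed, forcing $G_m \geq mrank[m][w]$, which is precisely the required prerequisite condition. For the matching property I would formalize the counting sketch already given in the paragraph preceding the lemma: the quantity (propose messages currently in transit) plus (women whose $partner$ is nonzero) is invariant at $n$, because every $(``reject",\cdot)$ sent by a woman and every effective $(``advance",\cdot)$ received by a man is followed immediately by a fresh propose, and the ``no constrained stable marriage'' branch was not taken. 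At quiescence no propose messages are in transit, so the $n$ live proposals collapse onto the $n$ women, each of whom therefore holds exactly one partner; a direct check against each $P_i$'s current outgoing proposal shows this bijection coincides with $i \mapsto mpref[i][G_i]$.

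For stability I would argue by contradiction: if $(j,w)$ is a blocking pair at termination, then $P_j$ strictly prefers $w$ over $mpref[j][G_j]$, and since a man proposes in decreasing preference order he already sent a propose message to $w$ earlier in the run; by quiescence this proposal has been delivered. A three-way case split on $Q_w$'s reaction at delivery time — accepted and kept $j$, accepted and later replaced, or immediately rejected — combined with the partner-monotonicity invariant forces $w$'s final partner to be at least as preferred as $j$, contradicting the assumption that she strictly prefers $j$. The step I expect to be the main obstacle is this last one, because it requires tracking $Q_w$'s partner history faithfully across the stale views and arbitrarily reordered advance/reject traffic permitted by the asynchronous model; the payoff for isolating the two monotonicity invariants up front is that each subcase of the blocking-pair analysis collapses to a one-line contradiction.
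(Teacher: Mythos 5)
Your overall route is the same as the paper's: use quiescence of the diffusing computation to conclude that the $n$ outstanding proposals have all been delivered and none rejected, deduce a bijection between men's current proposals and women's partners, and then rule out blocking pairs via the monotonicity of each woman's partner. The paper compresses this into a few sentences (``$n$ successful proposals,'' ``any proposal that violates stability is rejected and any proposal that violates external constraints is advanced''); you supply the supporting invariants explicitly, and your consistency argument via delivered \emph{advance} messages matches what the paper intends.

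The step you yourself flag as the main obstacle is in fact the one that breaks, and it breaks for a reason specific to the constrained version rather than to asynchrony. You write that if $(j,w)$ is blocking then ``since a man proposes in decreasing preference order he already sent a propose message to $w$ earlier in the run.'' That is true in Gale--Shapley but false here: on receiving an \emph{advance} message for a woman $q$ with $mrank[q] > G_j$, the code increments $G_j$ in a while-loop and proposes only to $q$, skipping every intermediate woman. A skipped woman $w$ never learns that $j$ passed her, so nothing forces her eventual partner to be at least as preferred as $j$, and $(j,w)$ can remain blocking at quiescence even though $B_{marriage}$ (whose forbidden test quantifies over all $k \le G[i]$, not just proposals actually transmitted) forbids exactly this configuration. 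Your blocking-pair case split therefore needs an additional case, ``$j$ passed $w$ without proposing,'' and the partner-monotonicity invariant gives you nothing there. To be fair, the paper's own proof does not close this case either --- ``any proposal that violates stability is rejected'' silently assumes every relevant proposal was sent --- so you have reproduced the paper's argument faithfully, including its weakest point; a complete proof would have to either notify skipped women (so they can reject inferior suitors) or argue separately why this configuration cannot arise at termination. A smaller nit: your counting quantity is not literally invariant while a reject is in transit or a displaced man has not yet re-proposed; it only needs to hold at quiescence, which is how the paper phrases it.
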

\begin{proof}
Since there are no {\em reject} messages, {\em advance} messages, or {\em propose} messages in transit,  we know that
there are $n$ successful proposals. Each successful proposal has the property that
the value of current for $P_i$ equals $j$ iff the value of partner for $Q_j$ equals $i$.
Since any proposal that violates stability is rejected and any proposal that violates external constraints is advanced
we get that the assignment on termination is a stable matching satisfying external constraints.
\end{proof}

We now analyze the message complexity of the algorithm.
Suppose that there are $e$ external constraints, $n$ men, $n$ women
and $m$ unsuccessful proposals.
There are $n$ initiate messages.
For every unsuccessful proposal, the algorithm uses at most one {\em reject} message.
There are exactly $n$ final successful proposals resulting in one message per proposal in the
diffusing computation.
If there are $e$ external constraints (solid edges) across processes), then there are at most $e$ advance messages.
Thus, the messages in the diffusing computation are at most
$n$ {\em initiate messages}, $m$  unsuccessful {\em propose} messages, $m$ {\em reject} messages, $n$ successful {\em propose} messages,
and $e$ {\em advance} messages.
Thus, the total number of messages in the diffusing computation is at most
$2m+2n+e$.

Termination detection algorithms such as Dijkstra and Scholten's requires as many messages as the
application messages in the worst case giving us the overall message complexity of $4m+4n+2e$ messages.
We note here that this message complexity can be reduced by various optimizations such as 
combining the {\em signal/ack} messages of Dijkstra and Scholten's algorithm with application messages.
For example, a {\em reject} message can also serve as an {\em ack} message for a {\em propose} message.
For simplicity, we do not consider these optimizations in the paper.
Since both $m$ and $e$ are $O(n^2)$, we get $O(n^2)$ overall message complexity.
Although the number of unsuccessful proposals can be $O(n^2)$ in the worst case,
they are $O(n \log n)$ on an average for the standard SMP \cite{knuth1997stable} .
Note that each message carries only $O(\log n)$ bits.

\remove {
\section{A Token-based Algorithm for Constrained SMP}

\label{sec:vdist}
The diffusing computation based distributed algorithm requires $4(m+n)+e$ messages of size $O(\log n)$ bits.
We now show a token-based algorithm that requires $2(m+n)$ messages although each message is of size $O(n \log n)$ bits..

%

The distributed WCP detection algorithm uses a unique token.  The
token contains two vectors.  The first vector is labeled {$G$}.
This vector defines the current candidate cut.  If {$G[i]$} has
the value $k$, then the proposal $k$ from process $P_i$ is part of the
current assignment.  Note that the current assignment may not be a matching, i.e., a woman may be assigned to multiple men.
Also, the current assignment may not satisfy the external constraints.
The token is initialized with $\forall i : G[i] = 0$.

The second vector is labeled $color$, where $color[i]$
indicates the color for the candidate proposal from process
$P_i$.  The color of a state can be $red$, $green$ or $yellow$.
If $color[i]$ equals $red$, then the proposal $(i,G[i])$ and all its
predecessors have been eliminated and can never satisfy the constrained stable matching.
If $color[i] = green$, then there is no state in $G$
such that $(i,G[i])$ happened before that state and the woman assigned to $P_i$ has not been 
assigned to any other man. If $color[i] = yellow$, then the proposal $(i,G[i])$ satisfies external constraints but the woman 
assigned to $i$ is assigned to another man $j$. In this case, the woman would decide if $P_i$ is preferable to $P_j$ or not.
When the token goes to that woman, she would convert the $yellow$ color to either $red$ or $green$ depending upon her 
preference.
 The token is
initialized with $\forall i : color[i] = red$. Whenever the token satisfies is green for all $P_i$, the algorithm terminates and
the constrained stable matching is given by $G$. If a process $P_i$ finds that its color is red in the token and it has run out of proposals,
then the algorithm terminates with the messages that ``no constrained stable marriage exists.''

The token is sent to process $P_i$ only when $color[i] = red$.
When it receives the token, $P_i$ moves to the next proposal in its $mpref[i]$ and then checks for violations of
consistency conditions with this new proposal.  
This activity is
repeated until the candidate proposal satisfies all external constraints.
It then checks if its proposal is to a woman who is already engaged.
In that case, the color is labeled yellow; otherwise, 
it can be labeled green. 

Next, $P_i$ examines the token to see if any other man violates
external constraints.  If it finds any $j$ such that $(j, G[j])$ happened
before $(i, G[i])$, then it makes $color[j]$ red. 

Now $P_i$ is ready to ship the token.
If all states in
$G$ are green, that is, $G$ satisfies all external constraints, no woman
is assigned to multiple men, and all men are assigned,  then $P_i$
has detected the constrained stable matching. Otherwise, if its color is yellow, the token is sent to
the corresponding woman.  If the color is green, then the token is sent to a man whose color is red.

The algorithm for these actions
is given in Fig. \ref{f:monitor-dwcp}.  Note that the token can start on any
process.

\begin{figure}[htbp]\begin{center}
\fbox{\begin{minipage}  {\textwidth}\sf
\begin{tabbing}
x\=xxxx\=xxxx\=xxxx\=xxxx\=xxxx\= \kill
$P_i$:\\
\>  {\bf var}\\
\> \>  // vector clock from the candidate state\\
\> \>  $candidate$: array[$1$..$n$] of integer initially $0$; \\
\\
\>   Upon receiving the token $(G, D, color)$ \\
\>  \>    {\bf while} $(color[i] = red)$ {\bf do}\\
\> \> \>      {\bf if} $(G[i] < n) G[i]++$; // move to the next proposal if any;\\
\> \> \>      else { announce ``no constrained stable marriage''; halt;} \\
\> \> \>     $D := max (D, G[i].v);$\\
\> \> \>      {\bf if} $(D[i] < G[i])$ {\bf then}\\
\> \> \> \>      $color[i]:=green;$\\
\> \>    {\bf endwhile};\\
\> \>      {\bf for} $j := 1$ {\bf to } $n$, $(j \neq i)$  {\bf do}\\
\> \> \>       {\bf if} $(D[j] >= G[j])$ {\bf then}\\
\> \> \> \>         $color[j] := red$;\\
\> \>      {\bf endfor}\\
\> \>      {\bf for} $j := 1$ {\bf to } $n$, $(j \neq i)$  {\bf do}\\
\> \> \>       {\bf if} $(color[j] = green) \wedge (G[i].w = G[j].w)$ {\bf then}\\
\> \> \> \>         $color[i] := yellow$;\\
\> \>      {\bf endfor}\\
\> \>      {\bf if} ($\exists j: color[j] = yellow)$ {\bf then} send token
to woman $G[i].w$;\\
\> \>      {\bf else if} ($\exists j: color[j] = red)$ {\bf then} send token
to woman $P_j$;\\
\> \>      {\bf else} {announce "constrained stable marriage found"; return $G$;}\\
\\
\\
Woman $wi$:\\
\>   Upon receiving the token $(G, D, color)$ from $P_j$\\
\>   $cur$ := man $k \neq j$ such that $(G[k].w = i) $ and $(G[k].color = green)$;\\
\> \> {\bf if} $(wrank[i][j] > wrank[i][cur])$ \\
\> \> \>  $color[j] := red; $ send token to $P_j$\\
\> \> {\bf else}\\
\> \> \>  $color[cur] := red; $ send token to $P_{cur}$
\end{tabbing}
\end{minipage}
} 
\end{center}
\caption{Monitor process algorithm \label{f:monitor-dwcp}}
\end{figure}

We first analyze the time complexity for computation. It is easy to
see that whenever a man receives the token, it advances by at least one proposal.
 Every time a proposal is advanced, $O(n)$ work
is performed by the process with the token.  There are at most
$n^2$ proposals; therefore, the total computation time for all processes
is $O(n^3)$.  The work for any process in the distributed
algorithm is at most $O(n^2)$. Upon receiving a token, either that man sends the token to another man
or the woman who he proposes next. A woman can send token only to a man.
Therefore, the total number of messages is at most twice the number of proposals
explored before the algorithm terminates. Each message carries $O(n)$ integers (assuming that an integer is sufficient to encode $n$).
}


\section{Superstable Matching}

In many applications, agents (men and women for the stable marriage problem) may not totally order all their choices. Instead, they
may be indifferent to some choices \cite{IRVING1994261,manlove2002structure}.
We generalize $mpref[i][k]$
 to a set of women instead of a single woman. Therefore, 
 $mrank$ function is not 1-1 anymore. Multiple women may have the same rank.
Similarly, $wrank$ function is not 1-1 anymore. Multiple men may have the same rank.
We now define the notion of blocking pairs for a matching $M$ with ties \cite{IRVING1994261}.
We let $M(m)$ denote the woman matched with the man $m$ and $M(w)$ denote the man
matched with the woman $w$.
In the version, called {\em weakly stable} matching $M$, there is no blocking pair of 
man and woman $(m,w)$ who are not married in $M$ but strictly prefer
each other to their partners in $M$. Formally,
a pair of man and woman $(m,w)$ is {\em blocking for a weakly stable matching} $M$ if they are
not matched in $M$ and\\
\ifdefined\ISBOOK
$(mrank[m][w] < mrank[m][M(m)])  \wedge$
$ (wrank[w][m] < wrank[w][M(w)]. $
\else
$(mrank[m][w] < mrank[m][M(m)])  \wedge$\\
$ (wrank[w][m] < wrank[w][M(w)]. $
\fi

For the weakly stable matching, ties can be broken arbitrarily and any matching that is stable in the resulting instance is also
weakly stable for the original problem.
Therefore, Gale-Shapley algorithm is applicable for the weakly stable matching \cite{IRVING1994261}.
We focus on other forms of stable matching --- superstable and strongly stable matchings.

A matching $M$ of men and women is {\em superstable} if there is no blocking pair $(m,w)$ such that they are not married in $M$
but they either prefer each other to their partners in $M$ or are indifferent with their partners in $M$.
Formally, 
a pair of man and woman $(m,w)$ is {\em blocking for a super stable matching} $M$ if they are
not matched in $M$ and\\
\ifdefined\ISBOOK
$(mrank[m][w] \leq mrank[m][M(m)])  \wedge$
$ (wrank[w][m] \leq wrank[w][M(w)]. $
\else
$(mrank[m][w] \leq mrank[m][M(m)])  \wedge$\\
$ (wrank[w][m] \leq wrank[w][M(w)]. $
\fi


The algorithms for superstable marriage have been proposed in \cite{IRVING1994261,manlove2002structure}.
Our goal is to show that LLP algorithm is applicable to this problem as well.
As before, we will use $G[i]$ to denote the $mrank$ that the man $i$ is currently considering. Initially, $G[i]$ is $1$ for all $i$, i.e.,
each man proposes to all his top choices. We say that $G$ has a superstable matching if there exist $n$ women $w_1, w_2, \ldots w_n$ such that
$\forall i: w_i \in mpref[i][G[i]]]$ and the set $(m_i, w_i)$ is a superstable matching.

We define a bipartite graph $Y(G)$ on the set of men and women with respect to any $G$ as follows.
If a woman does not get any proposal in $G$, then she is unmatched. If she receives multiple proposals then there is an edge
from that woman to all men in the most preferred rank. We say that $Y(G)$ is a perfect matching if 
every man and woman has exactly one adjacent edge in $Y(G)$, 

We claim 
\begin{lemma}
If $Y(G)$ is not a perfect matching, then
there is no superstable matching with $G$ as the proposal vector. 
\end{lemma}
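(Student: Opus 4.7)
The plan is to establish the contrapositive: if a superstable matching $M$ exists with $G$ as the proposal vector, then $Y(G)$ is in fact a perfect matching. To do this, I will argue by two symmetric steps that the set of matched pairs in $M$ coincides exactly with the edge set of $Y(G)$.

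First, I would show every pair $(m, M(m))$ of $M$ is an edge of $Y(G)$. Let $w = M(m)$. Since $w \in mpref[m][G[m]]$, man $m$ proposed to $w$. Suppose $m$ is \emph{not} in $w$'s top-rank proposer group in $Y(G)$. Then there exists another proposer $m'$ with $w \in mpref[m'][G[m']]$ and $wrank[w][m'] < wrank[w][m]$. In $M$, the partner $M(m')$ lies in $mpref[m'][G[m']]$, so $m'$ is indifferent between $w$ and $M(m')$, while $w$ strictly prefers $m'$ to $m = M(w)$. This produces a blocking pair under case~(2) of superstability, contradicting the assumption on $M$. Hence $(m, w) \in Y(G)$, and $M \subseteq Y(G)$.

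Second, I would show $Y(G)$ contains no edges outside $M$. Suppose $(m, w)$ is an edge of $Y(G)$ that is not in $M$. Then $m$ proposed to $w$ (so $w \in mpref[m][G[m]]$), and $m$ is tied with $M(w)$ at the top of $w$'s ranking among current proposers, so $w$ is indifferent between $m$ and $M(w)$. Likewise $m$ is indifferent between $w$ and $M(m)$, both lying in $mpref[m][G[m]]$. This is precisely a case~(3) blocking pair (mutual indifference) for superstability, again a contradiction. Hence $Y(G) \subseteq M$.

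Combining the two inclusions gives $Y(G) = M$, so every vertex of $Y(G)$ has degree exactly one and $Y(G)$ is a perfect matching, completing the contrapositive. The two sub-arguments are routine given the three-case definition of a superstable blocking pair; the main conceptual point I expect to be slightly delicate is recognizing that one must show $M = Y(G)$ rather than merely $M \subseteq Y(G)$, since $Y(G)$ can contain a perfect matching as a subgraph and still fail to be one. A minor bookkeeping case, namely a woman with no proposals in $G$, is subsumed by the first step: such a woman cannot equal any $M(m)$, yet must be covered by the perfect matching $M$, which is already impossible.
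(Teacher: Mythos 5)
Your proof is correct, and it takes a genuinely different route from the paper's. The paper argues in the forward direction: assuming $Y(G)$ is not a perfect matching, it splits on whether some man has no adjacent edge or some man has at least two adjacent edges, and manufactures a blocking pair in the latter case. You prove the contrapositive by establishing the stronger structural identity $M = Y(G)$: every matched pair of a superstable $M$ consistent with $G$ is a $Y(G)$-edge (otherwise the strictly better-ranked proposer $m'$, being indifferent between $w$ and $M(m')$, forms a blocking pair with $w$), and every $Y(G)$-edge is a matched pair (otherwise it is itself a mutual-indifference blocking pair). Your decomposition buys two things. First, it yields as a byproduct that the superstable matching at proposal vector $G$, when it exists, is \emph{exactly} $Y(G)$, which is a sharper and more reusable statement than the lemma itself. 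Second, it covers the configuration in which every man has exactly one adjacent edge but some woman has degree at least two (so two men share their unique neighbor); this case falls strictly between the two alternatives in the paper's dichotomy and is not explicitly treated there, whereas your Step~1 argument, applied to whichever of the two men is not matched to that woman, disposes of it. The paper's direct case analysis is shorter; yours is the more complete argument.
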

\begin{proof}
If there is a man with no adjacent edge in $Y(G)$  then it is clear that $G$ cannot have a superstable matching.
Now consider the case when a man has at least two adjacent edges. If all the adjacent women for this man have degree one, then
exactly one of them can be matched with this man and other women will remain unmatched.
Therefore, there is at least one woman $w$ who is also adjacent to  another man $m'$. If $w$ is matched with $m$, then $(m',w)$ is a blocking pair.
If $w$ is matched with $m'$, then $(m,w)$ is a blocking pair.
\end{proof}

We now claim that the predicate $B(G) \equiv Y(G) ~ \mbox{is a perfect matching}$ is a lattice-linear predicate.

\begin{lemma}\label{lem:superstable}
If $Y(G)$ is not a perfect matching, then at least one index in $G$ is forbidden.
 \end{lemma}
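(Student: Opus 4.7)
My plan is a case analysis on how $Y(G)$ fails to be a perfect matching, exploiting the monotonicity of the ``reach'' relation: man $m$ has reached woman $w$ in $G$ whenever $w \in mpref[m][k]$ for some $k \leq G[m]$. This relation is monotone in $G$ while the ranks $wrank, mrank$ are fixed, so for any $H \geq G$ with $H[j] = G[j]$ the set of women $j$ has reached is identical in $G$ and $H$, while any other man's reached set can only grow.

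The first case is when some man $m$ has no adjacent edge in $Y(G)$: at every woman $w$ reached by $m$, some $m'$ with $wrank[w][m'] < wrank[w][m]$ has also reached $w$ in $G$. I would claim $m$ is forbidden. Fix any $H \geq G$ with $H[m] = G[m]$; $m$'s reached set is unchanged, and each witness $m'$ satisfies $H[m'] \geq G[m']$, so $m'$ continues to reach the same $w$ at the same strictly-better rank. Hence $m$ has no edge in $Y(H)$, so $Y(H)$ is not a perfect matching, and $m$ is forbidden.

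The remaining case is when every man has degree at least $1$ in $Y(G)$ but $Y(G)$ is still not perfect. The handshake identity $\sum_m \deg_Y(m) = \sum_w \deg_Y(w)$ then forces some woman $w$ to have degree $\geq 2$ --- i.e., two or more tied-top reachers $m_1, m_2, \ldots$. I would try to show one of them, say $m_1$, is forbidden. For any $H \geq G$ with $H[m_1] = G[m_1]$, both $m_1$ and $m_2$ still reach $w$ by monotonicity; either no strictly better reacher appears at $w$ in $H$ (so $w$ still has degree $\geq 2$ in $Y(H)$ and $Y(H)$ is not perfect), or such a reacher does appear and $m_1$ loses his edge to $w$, in which case I must argue $m_1$ still fails to end up with exactly one edge in $Y(H)$.

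The main obstacle is this last subcase. Because ``tied at top'' is not monotone in $G$, $m_1$'s edges in $Y(H)$ are only a subset of those in $Y(G)$, and could drop to any value $0, 1, \ldots$. The resolution would be to choose $m_1$ carefully --- for example, the tied reacher at $w$ whose reached women are all contested by equally-good rivals --- so that losing his tie at $w$ forces him into the Case-1 situation inside $H$. An alternative strategy would be to prove lattice-linearity by showing directly that $\{G \in \CC : Y(G) \text{ is perfect}\}$ is meet-closed; the required forbidden index would then come for free from the unique minimum that meet-closure guarantees, bypassing the explicit construction.
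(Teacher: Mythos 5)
Your first case (a man $i$ with no edge in $Y(G)$) is correct and coincides with the paper's argument: the set of women $i$ has reached is frozen once $H[i]=G[i]$, while every woman who rejected him only accumulates proposals as $H$ grows, so she rejects him in $H$ as well. The genuine gap is in your second case, and you have flagged it yourself: starting from a woman $w$ of degree at least two and an \emph{arbitrary} tied proposer $m_1$ does not yield a forbidden index, since $m_1$ may also be adjacent to a degree-one woman and can end up correctly matched in some $H$ with $H[m_1]=G[m_1]$. Your ``choose $m_1$ carefully'' is precisely the step that needs a proof and is left as a plan.

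The paper closes this with a counting argument on $Z(G)$, the set of women of degree exactly one. Each woman in $Z(G)$ supplies her single edge to one man, so at most $|Z(G)|$ men are adjacent to a degree-one woman; if $Y(G)$ is not perfect while every man has degree at least one, then $|Z(G)|<n$, hence some man $i$ has \emph{all} of his neighbours of degree at least two. That $i$ is the right witness: for any $H\geq G$ with $H[i]=G[i]$, each neighbour $w$ of $i$ retains a rival proposal ranked at $w$ at least as well as $i$ (the rival's past proposals persist and his rank at $w$ is fixed), so either a strictly better proposer has appeared at $w$ and $i$ loses that edge, or $w$ still has degree at least two in $Y(H)$. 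Thus $i$ either has no edge in $Y(H)$ or only edges to women of degree at least two, and $Y(H)$ is not a perfect matching. Your fallback of proving that $\{G : Y(G)\ \mbox{is a perfect matching}\}$ is meet-closed would indeed suffice (on a product lattice, componentwise meet-closure is equivalent to lattice-linearity), but it is not carried out either and would require essentially the same combinatorial selection of a witness.
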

 \begin{proof}
Consider any man $i$ such that there is no edge adjacent to $i$ in $Y(G)$. This happens when all women that man $i$ has proposed in state $G$ have rejected him.
Consider any $H$ such that
$H[i]$ equals $G[i]$. All the women had rejected man $i$ in $G$. As $H$ is greater than $G$, these women can only have
more choices and will reject man $i$ in $H$ as well.

Now suppose that every man has at least one adjacent edge. 
Let $Z(G)$ be the set of women with degree exactly one. 
If every woman is in $Z(G)$, then we have that $Y(G)$ is a perfect matching because every man has at least one adjacent edge.
If not, consider any man $i$ who is not matched to a woman in $Z(G)$. This means that all the women he is adjacent to have
degrees strictly greater than one. In $H$ all these women would have either better ranked proposals or equally ranked proposals. In either case,
man $i$ would not be matched with any of these women. Hence, $i$ is forbidden.
\end{proof}

We are now ready to present \ref{fig:super-stable}. 
In \ref{fig:super-stable}, we start with the proposal vector $G$ with all components $G[j]$ as $1$. 
Whenever a woman receives multiple proposals, she rejects proposals by men who are ranked lower than anyone who has proposed to her.
We say that a man $j$ is forbidden in $G$, if
every woman $z$ that man $j$ proposes in $G$ is either engaged to or proposed by someone who she prefers to $j$ or is indifferent with respect to $j$.
 \ref{fig:super-stable} is a parallel algorithm because all processes $_j$ such that forbidden($j$) is true can advance in parallel.

\begin{algorithm}
 \SetAlgoRefName{LLP-ManOptimalSuperStableMarriage}
$P_j$: Code for thread $j$\\
 {\bf input}: $mpref[i,k]$: set of int for all $i,k$; $wrank[k][i]$: int for all $k,i$;\\
  {\bf init}:   $G[j] := 1$;\\
{\bf always}:  $Y(j) = mpref[j][G[j]];$\\
 \BlankLine
  {\bf forbidden($j$)}:\\
  \h  $\forall z \in Y(j): \exists i \neq j:  \exists k \leq G[i]: (z \in mpref[i][k]) \wedge (wrank[z][i] \leq wrank[z][j]))$\\
  // all women $z$ in the current proposals from $j$ have been proposed by someone who either they prefer or are indifferent over $j$.\\
 \h  {\bf advance}: $G[j] := G[j]+1; $
\caption{A Parallel Algorithm for Man-Optimal Super Stable Matching \label{fig:super-stable}}
\end{algorithm}

Let us verify that this algorithm indeed generalizes the standard stable marriage algorithm. For the standard stable marriage problem, $mpref[i,k]$ is singleton for all
$i$ and $k$. Hence, $Y(j)$ is also singleton. Using $z$ for the singleton value in $Y(j)$, we get the expression
$ \exists i \neq j:  \exists k \leq G[i]: (z = mpref[i][k]) \wedge (wrank[z][i] < wrank[z][j]))$ which is identical to the stable marriage problem once we
substitute $<$ for $\leq$ for comparing the $wrank$ of man $i$ and man $j$.

When the preference list has a singleton element for each rank as in the classical stable marriage problem, we know that there always exists at least one stable marriage. However, in presence of ties there is no guarantee of existence of a superstable marriage.
Consider the case with two men and women where each one of them does not have any strict preference.
Clearly, for this case there is no superstable marriage.

By symmetry of the problem, one can also get woman-optimal superstable marriage by switching the roles of men and women.
Let $mpref[i].length()$ denote the number of equivalence classes of women for man $i$. When all women are tied for the man $i$, the number of equivalence classes is equal to $1$, and when there
are no ties then it is equal to $n$. 
Consider the distributive lattice $L$ defined as the cross product of $mpref[i]$ for each $i$. 
We now have the following result.
\begin{theorem}
The set of superstable marriages, $L_{superstable}$, is a sublattice of the lattice $L$. 
\end{theorem}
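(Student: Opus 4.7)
The plan is to establish sublattice closure: for any $G_1, G_2 \in L_{superstable}$, both the meet $G = G_1 \wedge G_2$ (componentwise $\min$ on rank indices) and the join $G' = G_1 \vee G_2$ (componentwise $\max$) lie in $L_{superstable}$. This gives that both operations of $L$ are preserved on the subset, which is exactly the sublattice condition.

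I would first treat the meet. Let $M_1, M_2$ be the superstable matchings associated with $G_1, G_2$. Define a candidate matching $M_\wedge$ by the classical man-improvement rule: for each man $m$, set $M_\wedge(m) = M_1(m)$ if $m$ strictly prefers her to $M_2(m)$, set $M_\wedge(m) = M_2(m)$ if the reverse holds, and take either consistently when $m$ is indifferent. The first step is to verify that $M_\wedge$ is a genuine matching via the standard exchange argument: if two distinct men $m, m'$ both claim the same woman $w$, then one of $M_1, M_2$ must admit a blocking pair, contradicting superstability. In the indifference case one appeals to the $\leq$ form of the superstable blocking condition to rule out ambiguous matches. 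By construction, $M_\wedge(m) \in mpref[m][G[m]]$, so $M_\wedge$ is the matching encoded by $G$, and by Lemma \ref{lem:superstable} it suffices now to rule out blocking pairs under $M_\wedge$.

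Second, assume for contradiction that $(m,w)$ is a superstable-blocking pair for $M_\wedge$. Because $M_\wedge(m)$ is weakly preferred by $m$ to both $M_1(m)$ and $M_2(m)$, we have that $m$ weakly prefers $w$ to both of his partners in $M_1$ and $M_2$. A symmetric observation on the woman's side shows that $M_\wedge(w)$ is weakly less preferred by $w$ than her partners in $M_1$ and $M_2$; a case split on whether $M_\wedge(w)$ equals $M_1(w)$ or $M_2(w)$ (and on the tie structure when they coincide in rank but not identity) then forces the blocking condition in at least one of $M_1, M_2$, contradicting their superstability. The join case follows by the symmetric argument, either by constructing $M_\vee$ via woman-improvement (each woman's partner is the weakly preferred of $M_1(w), M_2(w)$), or by invoking the symmetry between men and women.

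The main obstacle throughout will be the careful bookkeeping of ties: because the superstable-blocking condition uses $\leq$ rather than $<$, the exchange arguments must be performed with weak inequalities, and subcases arise whenever two partners have the same rank but different identities. I would handle this by leveraging the full strength of superstability (not merely weak stability) to absorb the equality cases uniformly, so that the classical distributive-lattice argument underlying the Gale--Shapley sublattice theorem extends through the tied ranks.
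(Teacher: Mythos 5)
Your route is genuinely different from the paper's. The paper never constructs meet and join matchings explicitly: it observes that Lemma \ref{lem:superstable} establishes lattice-linearity of the predicate ``$Y(G)$ is a perfect matching,'' and lattice-linear predicates are automatically closed under componentwise meet (if the meet $G = G_1 \wedge G_2$ violated the predicate, some index $j$ would be forbidden in $G$; since $G[j]$ equals $G_1[j]$ or $G_2[j]$ and $G \leq G_1, G_2$, one of $G_1, G_2$ would then also violate it, a contradiction). Closure under join then follows by swapping the roles of men and women. Your argument is instead the classical Conway-style construction adapted to ties. It is more self-contained --- it does not lean on the meta-theorem that lattice-linear predicates are meet-closed, nor on the identification of superstable proposal vectors with perfect matchings in $Y(G)$ --- but it must therefore redo by hand the combinatorics that the LLP framework packages away, and that is where it runs into trouble.

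The genuine gap is the step you yourself flag as ``the main obstacle'': the definition of $M_\wedge(m)$ when $m$ is indifferent between $M_1(m)$ and $M_2(m)$ but $M_1(m) \neq M_2(m)$. ``Take either consistently'' is not a construction: a priori, different consistent choices could fail to produce a matching at all, and your subsequent claim that $M_\wedge(m) \in mpref[m][G[m]]$ with $G$ the componentwise minimum presupposes a definite partner. What is actually needed --- and what the structural theory of superstable matchings (Spieker, Manlove) supplies --- is a lemma that this case cannot arise: for two superstable matchings, each man either has the same partner in both or strictly prefers one to the other. That lemma is precisely where the weak-inequality ($\leq$) form of the blocking condition does real work, and it requires its own argument (chasing the alternating $M_1$/$M_2$ structure to exhibit a blocking pair if two equally ranked distinct partners existed); it is not absorbed by the exchange argument you describe, which only dispatches the case where two \emph{men} claim the same \emph{woman}. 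The same issue recurs on the women's side for the join. Until that lemma is in place, the meet is not well defined and the blocking-pair analysis in your second step has nothing to apply to.
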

\begin{proof}
From Lemma \ref{lem:superstable}, the set of superstable marriages is closed under meet.
By symmetry of men and women, the set is also closed under join.
\end{proof}

It is already known that the set of superstable marriages forms a distributive lattice \cite{spieker1995set}.
The set of join-irreducible elements of the lattice $L_{superstable}$ forms a partial order
(analogous to the rotation poset \cite{gusfield1989stable}) that can be used to generate all superstable marriages.
Various posets to generate all superstable marriages are discussed in \cite{scott2005study,DBLP:conf/wads/HuG21}

We note that the algorithm \ref{fig:super-stable} can also be used to find the constrained superstable marriage. In particular, the following
predicates are lattice-linear:
\begin{enumerate}
\item
Regret of man $i$ is at most regret of man $j$.
\item
The proposal vector is at least $I$.
\end{enumerate}

\remove{

\begin{figure}[htbp]\begin{center}
\begin{tabbing}
x\=xxx\=xxx\=xxx\=xxx\=xxx\= \kill
${\bf P_i}$:: // Process for Man $i$\\
\> {\bf input}\\
\> \>   $mpref$: array[$1$..$n$]  of set of $1..n;$ \\
\> \> // men's preferences: each a set of women\\
\> {\bf var}\\
\> \>   $G_i:1..n$ initially $1$; // proposal number by $P_i$ \\
\> \>  $numrejects: 0..n$ initially $0$; // number of women who have rejected this man\\
\\
\> Upon receiving a message ``initiate'' from environment;\\
\> \>  $acceptedProps := \emptyset;$\\
\> \>  $rejectedProps := \emptyset;$\\
\> \>    for each $w \in mpref[G_i]$\\
\> \> \>  send $(``proposal", i)$ to woman $w$;\\
\\
\> Upon receiving a message $(``reject", j)$:\\
\> \>   {\bf if} $(j \in mpref[G_i])$ {\bf then} // rejected by one of the current partners\\
\> \>  \>$rejectedProps := rejectedProps \cup \{j\}$;\\
\> \> \> {\bf if} $|rejectedProps| = mpref[G_i]$ {\bf then}\\
\> \> \> \> {\bf if} $(G_i = mpref.length)$ //no more choices left\\
\> \> \> \> \>  announce ``no super stable marriage possible" ;\\
\> \> \>  \> {\bf else } // move to the next ranked women\\
\> \> \> \> \> $G_i := G_i+1$;\\
\> \> \> \> \> $acceptedProps := \emptyset;$\\
\> \> \> \> \>  $rejectedProps := \emptyset;$\\
\> \> \> \>   for each $w \in mpref[G_i]$\\
\> \> \> \> \>  send $(``proposal", i)$ to woman $w$;\\
\\
\> Upon receiving a message $(``accept", q)$:\\
\> \> {\bf if} $(q \in mpref[G_i])$\\
\> \>  $acceptedProps := acceptedProps \cup \{q\};$\\
\\
${\bf Q_i}$:: // Process for Woman $i$\\
\> {\bf input}\\
\> \>   $wrank$: array[$1$..$n$]  of $1..n;$ // rank of each man by the woman \\
\> {\bf var}\\
\> \>   $partner$: $0..n;$ initially $0$ // current partner \\
\\
\> Upon receiving a message $(``proposal'', j)$:\\
\> \>   {\bf if} $(partner = 0)$ {\bf then}\\
\> \> \>  $partner := j$;\\
\> \>   {\bf else if} $(wrank[j] < wrank[partner]) $ {\bf then}\\
\> \> \> send $(``reject", i)$ to $P_{partner}$;\\
\> \> \>  $partner := j$;\\
\> \> {\bf else} // $wrank[j] \geq wrank[partner]
\\
\\
{\bf Environment}::\\
 Process that (1) initiates the diffusing computation and \\
 (2) detects Termination\\
 \\
 \> send ``initiate'' message to all $P_i$\\
\> Upon Detecting Termination of Diffusing Computation\\
 \> \> Announce the current assignment as a stable marriage \\
 \> \> satisfying external constraints. Halt\\
\end{tabbing}

\end{center}
\caption{{A diffusing distributed computation algorithm for constrained SMP} for men $P_i$ and women $Q_i$\label{fig:CSMP}}
\end{figure}

}

\section{Strongly Stable Matching}

A matching $M$ of men and women is {\em strongly stable} if there is no blocking pair $(m,w)$ such that they are not married in $M$
but either (1) both of them prefer each other to their partners in $M$, or (2) one of them prefers the other to his/her partner in $M$ and the other one is
indifferent.
Formally, 
a pair of man and woman $(m,w)$ is {\em blocking for a strongly stable matching} $M$ if they are
not matched in $M$ and\\

\ifdefined\ISBOOK
$((mrank[m][w] \leq mrank[m][M(m)])  \wedge $
$ (wrank[w][m] < wrank[w][M(w)]))$\\
$ \vee ((mrank[m][w] < mrank[m][M(m)])  \wedge $
$(wrank[w][m] \leq wrank[w][M(w)])). $
\else
\h $((mrank[m][w] \leq mrank[m][M(m)])  \wedge $\\
\h $ (wrank[w][m] < wrank[w][M(w)]))$\\ 
$ \vee ((mrank[m][w] < mrank[m][M(m)])  \wedge $\\
\h $(wrank[w][m] \leq wrank[w][M(w)])). $
\fi

As in superstable matching algorithm, we let $mpref[i][k]$ denote the set of women ranked $k$ by man $i$.
As before, we will use $G[i]$ to denote the $mrank$ that the man $i$ is currently considering. Initially, $G[i]$ is $1$ for all $i$, i.e.,
each man proposes to all his top choices. We define a bipartite graph $Y(G)$ on the set of men and women with respect to any $G$ as follows.
If a woman does not get any proposal in $G$, then she is unmatched. If she receives multiple proposals then there is an edge
from that woman to all men in the most preferred rank. For superstable matching, we required $Y(G)$ to be a perfect matching.
For strongly stable matching, we only require $Y(G)$ to contain a perfect matching.

We first note that a strongly stable matching may not exist.
The following example is taken from \cite{IRVING1994261}. \\
\\
$m1: w1, w2$\\
$m2:$ both choices are ties\\
\\
$w1: m2, m1$\\
$w2: m2, m1$

The matching $\{(m1, w1), (m2, w2)\}$ is blocked by the pair $(m2, w1)$: $w1$ strictly prefers $m2$ and $m2$ is indifferent between $w1$ and $w2$.
The only other matching is $\{(m1, w2), (m2, w1)\}$.  This matching is blocked by $(m2, w2)$: $w2$ strictly prefers $m2$ and $m2$ is indifferent between
$w1$ and $w2$.

\remove{
For strongly stable matchings, we will focus only on finding the least proposal vector $G$ such that $Y(G)$ contains a perfect matching.
To that end, we define the predicate $B(G)$ to be true if $G$ is the least vector such that $Y(G)$ has a perfect matching.
We show that whenever there is a perfect matching in $G_1$ and $G_2$ and there is no perfect matching in $G_1 \cap G_2$, then
there exists $G_3 < G_1 \cap G_2$ that has a perfect matching.

}

Consider any bipartite graph with an equal number of men and women. If there is no perfect matching in the graph,
then by Hall's theorem there exists a set of men of size $r$ who collectively are adjacent to fewer than $r$ women.
We define {\em deficiency} of a subset $Z$ of men as $|Z| - N(Z)$ where $N(Z)$ is the {\em neighborhood} of $Z$ (the set of vertices that are adjacent to
at least one vertex in $Z$). The deficiency $\delta(G)$ is the maximum deficiency taken over all subsets of men.
We call a subset of men $Z$ {\em critical} if it is maximally deficient and does not contain any maximally deficient proper subset.
Our algorithm to find a strongly stable matching is simple. 
We start with $G$ as the global state vector with top choices for all men.
If $Y(G)$ has a perfect matching, 
we are done. The perfect matching in $Y(G)$ is a strongly stable matching.
Otherwise, there must be a critical subset of men with maximum deficiency. These set of men must then advance on their proposal number, if possible.
If these men cannot advance, then there does not exist a strongly stable marriage and the algorithm terminates.

\begin{algorithm}
 \SetAlgoRefName{LLP-ManOptimalStronglyStableMarriage}
$P_j$: Code for thread $j$\\
 {\bf input}: $mpref[i,k]$: set of int for all $i,k$; $wrank[k][i]$: int for all $k,i$;\\
  {\bf init}:   $G[j] := 1$;\\
{\bf always}:  $Y(j) = mpref[j][G[j]];$\\
 \BlankLine
  {\bf forbidden($j$)}:\\
  \h $j$ is a member of the critical subset of men in the graph $Y(G)$\\
  \h  {\bf advance}: $G[j] := G[j]+1; $
\caption{A Parallel Algorithm for Man-Optimal Strongly Stable Matching \label{fig:strongly-stable}}
\end{algorithm}

\ref{fig:strongly-stable} is the LLP version of the algorithm proposed by Irving and the interested reader is referred to 
\cite{IRVING1994261}  for the details and the proof of correctness.
Similar to superstable marriages, 
we also get the following result.
\begin{theorem}
The set of strongly stable marriages, $L_{stronglystable}$, is a sublattice of the lattice $L$. 
\end{theorem}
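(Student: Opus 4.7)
The plan is to mirror the two-step structure used in the proof for the superstable case. Closure under meet will come from a lattice-linearity argument analogous to Lemma \ref{lem:superstable}, and closure under join will come from the symmetry of the strongly-stable blocking-pair condition. Combining the two gives the sublattice claim.

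For meet-closure, I would establish the strongly-stable analog of Lemma \ref{lem:superstable}: if $G \in L$ does not yield a strongly stable matching, then the set of indices belonging to the critical subset $Z$ of $Y(G)$ (identified in Algorithm \ref{fig:strongly-stable}) is a witness that $G$ cannot be the proposal vector of any strongly stable matching above it. Concretely, I would invoke Hall's theorem: if $|Z| - |N_{Y(G)}(Z)| = \delta > 0$ and $H \geq G$ with $H[j] = G[j]$ for every $j \in Z$, then each $j \in Z$ still proposes only to the women in $mpref[j][G[j]]$, so the neighborhood of $Z$ in $Y(H)$ stays within $N_{Y(G)}(Z)$, and the deficiency remains at least $\delta > 0$. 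Hence $Y(H)$ still lacks a perfect matching and $H$ cannot be a strongly stable assignment. By the standard LLP argument (Lemma \ref{lem:basic-LLP}), this lattice-linearity yields closure of $L_{stronglystable}$ under meet in $L$.

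For join-closure, I would appeal to symmetry: the defining condition for a blocking pair of a strongly stable matching is symmetric in the roles of men and women (swap $m \leftrightarrow w$, $mrank \leftrightarrow wrank$). Running Algorithm \ref{fig:strongly-stable} with the roles of men and women exchanged computes the woman-optimal strongly stable matching as the least element of the woman-indexed lattice, which corresponds to the greatest element of $L$. The same lattice-linearity argument from the dual side then gives closure of $L_{stronglystable}$ under join in $L$. Since $L_{stronglystable}$ is closed under both operations, it is a sublattice of $L$.

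The main obstacle is the Hall's-theorem step inside meet-closure. The subtlety is that while $j \in Z$ keeps its proposals fixed, other men outside $Z$ may advance to lower-ranked classes in $H$, potentially altering which proposals are top-ranked at each woman and therefore reshaping the edges of $Y(H)$. I would need to argue carefully that these reshufflings cannot create new edges from $Z$ to women outside $N_{Y(G)}(Z)$, since every such edge must still come from a proposal $i \in Z$ whose target set is unchanged from $G$ to $H$. Once this is done, the deficiency bound $|Z| - |N_{Y(H)}(Z)| \geq \delta > 0$ is immediate and the rest follows, with the detailed verification available in Irving's analysis \cite{IRVING1994261}.
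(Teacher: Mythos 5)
Your overall plan --- meet-closure from lattice-linearity of the forbidden-index characterization, join-closure from the symmetry of the strongly-stable blocking condition --- is exactly the route the paper intends: its proof is just the one line ``similar to superstable marriages,'' mirroring the superstable proof (Lemma~\ref{lem:superstable} for meet, symmetry for join) and deferring all detail to Irving \cite{IRVING1994261}. So structurally you are aligned with the paper. (One small misattribution: Lemma~\ref{lem:basic-LLP} gives sufficient conditions for lattice-linearity; the fact that lattice-linearity is equivalent to meet-closedness of the satisfying set is the Chase--Garg characterization, not that lemma.)

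The genuine gap is in your Hall's-theorem step. You verify the deficiency bound for $H \geq G$ satisfying $H[j] = G[j]$ \emph{for every} $j \in Z$, but $\forbidden(G,j,B)$ quantifies over all $H \geq G$ with only the single coordinate $H[j] = G[j]$ pinned; the other members of the critical set $Z$ are free to advance in $H$. Then the neighborhood of $Z$ in $Y(H)$ need not stay inside $N_{Y(G)}(Z)$ and the deficiency need not persist: take $Z = \{j, j'\}$ both adjacent only to $w$ in $Y(G)$; if $j'$ advances in $H$ to propose to some $w'$, then $Y(H)$ may well contain a perfect matching. Your proposed patch (``every such edge must still come from a proposal $i \in Z$ whose target set is unchanged'') assumes precisely the all-of-$Z$-pinned hypothesis that forbiddenness does not grant, and you flag only the men \emph{outside} $Z$ as the worry, when the real problem is the men inside $Z$ other than $j$. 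The correct argument cannot conclude ``$Y(H)$ has no perfect matching''; it must instead show that if some $j' \in Z$ advanced to make $Y(H)$ perfectly matchable, then a strongly-stable blocking pair such as $(j', w)$ arises --- $j'$ skipped a woman $w$ he weakly preferred while $w$ ends up, by criticality/minimality of $Z$, with a partner she does not strictly prefer to $j'$. That blocking-pair chase, driven by the minimality of the deficient set rather than by a deficiency count in $Y(H)$, is the substance of Irving's analysis and is what your write-up is missing.
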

Observe that each element in $L_{stronglystable}$ is not a single marriage but a set of marriages. This is in contrast to 
$L_{superstable}$, where each element corresponds to a single marriage.

\remove {
\begin{algorithm}
 \SetAlgoRefName{LLP-ManOptimalStronglyStableMarriage}
 {\bf input}: $mpref[i,k]$: set of int for all $i,k$; $rank[k][i]$: int for all $k,i$;\\
  {\bf init}:   $\forall j: G[j] := 1$;\\
{\bf always}: \\
\> Y(G) =  bipartite graph between men and women such that there is an edge between a man $m$ and a woman $w$\\ if
$w$ is one of the top choices for $m$ according to $G$ and $m$ is one of the top choices for $w$ of all proposals received in $G$\\
\> $\delta(G)$ = maximum deficiency of any subset of men in $Y(G)$\\
%
\> \>  $maximallyDeficient(J) \equiv |J - N(J)| = \delta(G)$\\
\> \>  $critical(J) \equiv maximallyDeficient(J) \wedge \forall K \subsetneq J: \neg maximallyDeficient(K)$\\
\\
\> \myblue{\bf forbidden(j)}:  $\exists J: critical(J) \wedge (j \in J) $\\
\>  \myblue{\bf advance}: $G[j] := G[j]+1; $\\
\caption{A Parallel Algorithm for Man-Optimal Strongly Stable Matching \label{fig:strong-stable}}
\end{algorithm}

}

\remove {

\begin{theorem}
There is a strongly stable matching with $G$ as the proposal vector iff $Y(G)$ has a perfect matching.
\end{theorem}
\begin{proof}
It is clear that if $Y(G)$ does not have a perfect matching, then there cannot be a strongly stable matching corresponding to $G$.
We show the converse: 
if $Y(G)$ has a perfect matching then
there is a strongly stable matching with $G$ as the proposal vector. 
If not, there must be a pair $(m,w)$ such that
at least one of them can improve their partner while the other one either improves the partner or is indifferent.
Suppose that man $m$ can improve his partner. This means that man $m$ proposed to $w$ in the proposal vector
strictly less than $G$. Since $G$ has a perfect matching, the man $m$ has proposed to a less preferable woman $w'$.
The woman $w$ is matched to a man $m'$ in the perfect matching in $Y(G)$.
Since the choices for women can only improve or stay the same as the proposal vector increases, we know that
$rank[w][m'] \leq rank[w][m]$.
If $rank[w][m'] < rank[w][m]$, then $(m,w)$ is not a blocking pair because the woman prefers $m'$ to $m$.

By induction, we can assume that there is no strongly stable matching in any vector less than $G$.
Since all men have women in their top rank, they cannot improve their rank. 
Since women have edges only for their most preferred rank, 
they can also not improve their rank by going with an edge that in not in the matching.
\end{proof}

We now claim that the predicate $B(G) \equiv Y(G) ~ \mbox{contains a perfect matching}$ is a lattice-linear predicate.

\begin{lemma}
If $Y(G)$ in not a perfect matching, then at least one index in $G$ is forbidden.
 \end{lemma}
\begin{proof}
If there is no perfect matching in $Y(G)$, then there must be a minimal set of 
men, $Z$
such that the set of neighbors of $Z$ is strictly smaller in size than $Z$. We claim that every man $i \in Z$ is forbidden.
Consider any $H \geq G$ such that $G[i]$ equals $H[i]$. Suppose, if possible $H$ has a strongly stable matching $M_H$.
In $M_H$, man $i$ is matched with a woman in $w \in mpref[i][H[i]]$. The woman must rank man $i$ as the most preferred until $G$; otherwise, there
would not be any edge between $w$ and $i$ in $G$ and therefore in $H$. There has to be at least one more edge from $w$ to some other man in $Y(G)$; otherwise,
we can remove man $i$ from $Z$ and $w$ from $N(Z)$ and we have a set $Z'$ strictly smaller than $Z$ such that the set of neighbors of $Z'$ is smaller than $Z'$, contradicting
that $Z$ is a {\em minimal} underdemanded set.
Suppose $w$ has an edge to man $j$ in $Y(G)$. 

If $H[j] > G[j]$, then $(j,w)$ is a blocking pair to the matching in $H$ because man $j$ prefers $w$ to his match in $H$ and $w$ is indifferent between
$i$ and $j$. 

If $H[j] = G[j]$, then  we do the similar analysis for $j$. The woman who is matched with man $j$ must have at least one edge outside of $\{i,j\}$; otherwise, 
we can remove both $i$ and $j$ and their neighbors from $Z$ and get a smaller underdemanded set. Continuing in this manner, we either find a blocking pair for $H$ or
contradict minimality of $Z$.
\end{proof}

%

}

\section{Conclusions and Future Work}
We have shown that the Lattice-Linear Parallel Algorithm can solve many problems in the stable marriage literature.
We have shown that the LLP Algorithm can also be converted into an asynchronous distributed algorithm.


In the constrained SMP formulation, we have assumed that $(E, \ra_p)$ is a poset for simplicity.
Our algorithms are applicable when $(E, \ra_p)$ may have cycles. For the general graph $(E, \ra_p)$ we can consider
the graph on strongly connected components which is guaranteed to be acyclic. By viewing each strongly
connected component as a super-proposal in which multiple proposals are made simultaneously, the same
analysis and algorithms can be applied.

We have also derived parallel LLP algorithms for stable matching problems with ties.
Our technique gives an easy derivation of algorithms to find the man-optimal matchings as well as the
sublattice representation of superstable and strongly stable matchings.




%
\bibliography{../../book-opt/fmaster}
\pagebreak
\appendix

\end{document}